\setlist[description]{align=right,style=multiline,leftmargin=2.2cm,labelindent=0.2cm}
\newcommand{\balanced}[1]{\textnormal{bal}(#1)}
\newcommand{\measuredunitint}{\mathopen{\bm{[}}\measured{0,1}\mathclose{\bm{]}}}
\newcommand{\realposext}{\mathbf{\bar{R}_{+}}}
\newcommand{\nat}{\mathtt{nat}}
\newcommand{\myparagraph}[1]{\vspace{0.2cm}\noindent\textbf{#1.}~}
\newcommand{\exchange}{\mathtt{xch}}
\newcommand{\der}[1]{\de{der}_{\cond{#1}}}
\newcommand{\dig}[1]{\de{dig}_{\cond{#1}}}
\theoremstyle{theorem}
\newtheorem{theorem}{Theorem}
\theoremstyle{definition}
\newtheorem{definition}{Definition}
\theoremstyle{remark}
\newtheorem*{notation}{Notations}
\newtheorem{example}{Example}
\newtheorem{remark}{Remark}
\begin{document}

\setlength{\pdfpageheight}{\paperheight}
\setlength{\pdfpagewidth}{\paperwidth}

%\conferenceinfo{LICS '16}{July 05 - 08, 2016, New York, NY, USA} 
%\copyrightyear{2016} 
%\copyrightdata{978-1-4503-4391-6/16/07}%\acmPrice{\$15.00}
%\copyrightdoi{http://dx.doi.org/10.1145/2933575.2934568}

               % ACM gets exclusive license to publish, 
                                  % you retain copyright

%\CopyrightYear{2016} 
%\setcopyright{acmlicensed}
%\conferenceinfo{LICS '16,}{July 05 - 08, 2016, New York, NY, USA}
%\isbn{978-1-4503-4391-6/16/07}\acmPrice{\$15.00}
%\doi{http://dx.doi.org/10.1145/2933575.2934568}
%
% Uncomment one of the following two, if you are not going for the 
% traditional copyright transfer agreement.

%\permissiontopublish             % ACM gets nonexclusive license to publish
                                  % (paid open-access papers, 
                                  % short abstracts)

%\titlebanner{Interaction graphs: full linear logic}        % These are ignored unless
%\preprintfooter{Preprint submitted to LICS2016}   % 'preprint' option specified.

\title{Interaction Graphs: Full Linear Logic}
%\subtitle{Subtitle Text, if any}

\authorinfo{Thomas Seiller}
           {Department of Computer Science, University of Copenhagen, Njalsgade 128, bygning 24, 2300 K\o benhavn S, Denmark}
           {seiller@di.ku.dk}
%\authorinfo{Name2\and Name3}
  %         {Affiliation2/3}
     %      {Email2/3}

\maketitle

\begin{abstract}
Interaction graphs were introduced as a general, uniform, construction of dynamic models of linear logic, encompassing all \emph{Geometry of Interaction} (GoI) constructions introduced so far. This series of work was inspired from Girard's hyperfinite GoI, and develops a quantitative approach that should be understood as a dynamic version of weighted relational models. Until now, the interaction graphs framework has been shown to deal with exponentials for the constrained system ELL (Elementary Linear Logic) while keeping its quantitative aspect. Adapting older constructions by Girard, one can clearly define \enquote{full} exponentials, but at the cost of these quantitative features. We show here that allowing interpretations of proofs to use continuous (yet finite in a measure-theoretic sense) sets of states, as opposed to earlier Interaction Graphs constructions were these sets of states were discrete (and finite), provides a model for full linear logic with second order quantification.
\end{abstract}

\category{F.3.2}{Semantics of Programming Languages}{Denotational Semantics}

% general terms are not compulsory anymore, 
% you may leave them out
\terms
Semantics; Quantitative Models

\keywords
Interaction Graphs; Linear Logic; Geometry of Interaction; Quantitative Semantics; Measurable Dynamics\newline

\section{Introduction}

This work deals with so-called dynamic models of proof theory, such as game semantics and geometry of interaction. It extends previous work providing a uniform construction of quantitative dynamic models of (fragments of) linear logic to full linear logic with second-order quantification.

\myparagraph{Geometry of Interaction}
A Geometry of Interaction (GoI) construction, i.e. a construction that fulfills the GoI research program \cite{towards}, is in a first approximation a representation of linear logic proofs that accounts for the dynamics of cut-elimination. %A proof is no longer a morphism from $A$ to $B$ — a function from $A$ into $B$ — but an operator acting on the space $A\oplus B$. As a consequence, the modus ponens is no longer represented as composition. The operation representing cut-elimination, i.e. the obtention of a cut-free proof of $B$ from a cut-free proof of $A$ and a cut-free proof of $A\multimap B$, consists in constructing the solution -- in general through the so-called execution formula $\Ex(\cdot)$ -- to an equation called the \emph{feedback equation}. 
Contrarily to denotational semantics, a proof $\pi$ and its normalised form $\pi'$ are not represented by the same object, but they remain related through a semantic interpretation of the cut-elimination called the \emph{execution} $\Ex$: $\Ex(\pi)=\pi'$. A GoI construction hence represents both the proofs and their normalisation; it is in some ways an untyped variant of game semantics  \cite{hylandong}.

The further aim of geometry of interaction is to reconstruct logical operations from such a dynamic representation of proofs. The objects of study in a GoI construction are actually a generalisation of the notion of proof -- sometimes called paraproofs%, in the same sense proof structures (i.e. not necessarily correct proof nets) are a generalisation of sequent calculus proofs
. This point of view allows a reconstruction of logic as a description of how these objects interact in the same spirit as realisability \cite{kleene,krivine1,krivine2}: a program is of type $\nat\rightarrow\nat$ because it produces a natural number when given a natural number as an argument. As in game semantics and classical realisability, one can however describe a necessary condition for being the interpretation of a proof, and defines \emph{winning paraproofs} as those objects satisfying it.%A notion of \emph{winning paraproof} -- a winning strategies in game semantics -- is then \note{le bon term} to characterise those 

In spite of their seemingly deep abstraction, the GoI constructions provide mathematical models which are very close to actual computing. As an illustration of this fact, let us mention the \emph{Geometry of Synthesis} program initiated by Ghica \cite{GoS1,GoS2,GoS3,GoS4}. This research program, inspired by geometry of interaction, aims at obtaining logical synthesis methods for VLSI designs.

\myparagraph{Quantitative Semantics}
Quantitative semantics find its origins in Girard's work on functor models for lambda-calculus \cite{FoncteurAnalytiques}. This work, which predates its seminal work on linear logic \cite{linearlogic} and actually inspired it, exhibits for the first time a decomposition of the semantic interpretation of lambda-terms as Taylor series. These series capture a number of information about the time, space, resource consumption of the programs it represents. Quantitative semantics are therefore more involved than so-called \emph{qualitative} semantics, since they mirror more information about the programs that are interpreted. %\note{Even though quantitative techniques have been used  since long in verification, the semantic community's interest is quite recent.}
Recently, quantitative semantics has been used to give denotational semantics for various algebraic extensions of lambda calculus such as probabilistic \cite{probcoh} or differential lambda calculi \cite{finitenessspaces}. Work by Laird, Manzonetto, McCusker and Pagani on weighted relational models \cite{quantdenot} provides a uniform account of several denotational models accounting for quantitative notions, using a refinement of the relational model. 

\myparagraph{Interaction Graphs}
Interaction graphs were first introduced by the author \cite{seiller-goim} as a combinatorial approach to Girard's hyperfinite Geometry of Interaction \cite{goi5}, restricted to the multiplicative fragment of linear logic. An extension capable to deal with additive connectives was then defined \cite{seiller-goiadd} and shown to abstract not only the (additive fragment of the) hyperfinite GoI model but all previously introduced GoI constructions as well. Both papers proposed a model construction in the spirit of Girard's GoI construction where proofs were interpreted by graphs instead of infinite operators. Dealing with exponentials however needs one to consider infinite objects. This is why a third paper \cite{seiller-goig} showed how the construction on graphs can be applied when working with a generalisation of graphs named \emph{graphings}. Graphings are in some sense \emph{geometric realisations} of graphs on a measure space $\measured{X}$ which were first introduced in the context of ergodic theory \cite{adams,gaboriaucost}. This allows not only to consider infinite graphs (which can be used to define exponentials in the same way as the original GoI constructions), but also graphs acting on continuous, thus infinite but finite-measure spaces. This general construction on graphings was shown \cite{seiller-goig} to improve on Girard's hyperfinite GoI \cite{goi5} since it allows a satisfactory treatment of second-order quantification. Lastly, a fourth paper\footnote{Although the author's PhD thesis \cite{seiller-phd} did not contain the general treatment of graphings \cite{seiller-goig}, it already introduced the model of \ELL \cite{seiller-goie} and the restricted theory of graphings this model uses.} \cite{seiller-goie} showed how the consideration of graphings can be used to define \enquote{quantitative} exponential connectives for Elementary Linear Logic \cite{LLL}, a fragment of linear logic that captures elementary time computation \cite{danosjoinet}.

\myparagraph{Unbounded Exponentials and Quantitative Aspects}
The author's work on Interaction Graphs should be understood \cite{seiller-fock} as a \emph{dynamic counterpart} of weighed relational models by Laird \emph{et al.} \cite{quantdenot}, i.e. its relation to standard dynamic models (geometry of interaction, game semantics) is comparable to weighted relational models' relations with standard relational models. Indeed, it provides a uniform construction of models which not only captures all of Girard's GoI models, but also extends them: while Girard's constructions can be understood as interpreting proofs as graphs\footnote{Girard interprets proofs as partial isometries acting on a Hilbert space $\hil{H}$ which, by considering the right basis for $\hil{H}$ correspond to graphs.}, we here interpret proofs as \emph{weighted graphs}, i.e. graphs with weighted edges\footnote{Actually, the most general models are built around the lesser known notion of weighted \emph{graphing}. However, thinking about graphings as graphs should provide the reader with the right intuitions.}. Furthermore, interaction graphs models can reflect these quantitative information at the level of types since the latter are built from an \emph{orthogonality relation} which can take those weights into account. Indeed, the orthogonality relation is defined through a measurement of cycles \cite{seiller-goig} by means of an integral over a finite-measure space -- the \emph{support} of the cycle. In the simplest cases one measures a cycle $\pi$ of suppport $\supp{\pi}$ and weight $\omega(\pi)\in\Omega$, along a measurable map $m:\Omega\rightarrow\realposext$, by the following integral:
\begin{equation}
\int_{\supp{\pi}} m(\omega(\pi))\label{integral}
\end{equation}

Since Interaction Graphs provide a generalisation of Girard's constructions, one could easily adapt the interpretation of exponential connectives from Girard's first constructions \cite{goi1,goi3} to obtain a model of full linear logic. This adaptation would extend to Danos' interpretation of pure lambda-calculus in GoI \cite{danos-phd}. However, this interpretation of exponential connectives corresponds to defining $\oc a$ as a (countable) infinite family of copies of $a$. Thus, even if $a$ is represented by a graphing acting on a space of finite measure, its exponentiated version $\oc a$ acts on a space of infinite measure. This fact hinders the quantitative aspects of our model since it creates cycles $\pi$ whose support $\supp{\pi}$ are spaces of infinite measure. As a consequence, the integral defining the orthogonality relation (Definition \ref{integral}) diverges as soon as the weight is not mapped to $0$, i.e. as soon as $m(\omega(\pi))\neq 0$. The resulting model is therefore no longer capable of depicting quantitative information.

\myparagraph{Contributions}
We define, in the framework of interaction graphs, exponential connectives for full linear logic in a way that preserves the quantitative aspects of the construction, providing the first dynamic model of second-order linear logic accounting for quantitative aspects. Indeed, to the author's knowledge, there exists no game semantics or GoI models for this logical system which include quantitative features. Indeed, although Girard's so-called GoI3 construction \cite{goi3} provides a model of this same logical system, the treatment of exponential connectives in the latter work prevents from any generalisation accounting for quantitative information, as already explained. On the side of game semantics, the quantitative game semantics for linear logic of dal Lago and Laurent \cite{quantgames} does not deal with additives and quantifiers and moreover seems more limited than our own models in the range of quantitative features it can accommodate.

Moreover, we are able to pinpoint the computational principles (represented as measurable maps) that are essential to interpret digging and dereliction, providing new insights on constraint linear logic and their semantics. Formally, this is obtained by exhibiting a single map -- the \emph{exchange} $\exchange$ -- which turns a model of \ELL into a model of \LL. Indeed, as discussed after Definition \ref{microdef}, removing the exchange restricts our model to the model of \ELL based on graphings \cite{seiller-phd,seiller-goie}. 

\myparagraph{Outline of the paper}
We define exponential connectives along the same lines as in our work on \ELL (thus bounded) exponentials \cite{seiller-goie}, avoiding the involvement of infinite-measure sets. With this definition of exponential connectives, one would however expect only a restriction of linear logic, such as \ELL. To bypass this restriction, we relax the notion of states. Indeed, the interpretation of proofs in interaction graphs makes use of so-called \emph{thick graphs} -- or \emph{thick graphings} in the general framework --, which can be understood as graphs with states. While previous work considered only finite sets of states, we loosen this definition to allow for infinite yet finite-measure (actually continuous) sets of states. This modification impacts slightly on the basic notions notions and constructions considered in previous work \cite{seiller-goig}, for which we introduce adequate generalisations. These changes, however, do not raise any technical difficulties. The resulting model is then shown to model digging and dereliction in addition to the principles of Elementary Linear Logic, thus interpreting full linear logic. Finally, we discuss the issue of the representation of cut-elimination in the model.

%This paper is the second GoI-style construction for full linear logic (including additives) and second order quantification. Indeed, Girard's so-called GoI3 construction \cite{goi3} already provided such a model. However, as explained above, the treatment of exponential connectives in the latter work prevents from any generalisation accounting for quantitative information. The main contribution of this paper therefore lies in its quantitative aspects. %Moreover, we are able to pinpoint the computational principles (represented as measurable maps) that are essential to interpret digging and dereliction, i.e. we exhibit a single map -- the \emph{exchange} $\exchange$ -- which turns a model of ELL into a model of LL. L

%\myparagraph{Related Work}

%Apart from geometry of interaction constructions, related work include quantitative realisability \cite{hoffmandallago,brunelquantitative} -- which provides characterizations of computational complexity classes -- and quantitative game semantics for linear logic \cite{quantgames}, although the latter does not deal with additives and quantifiers and seems more limited in its possible quantitative features. %In particular, we believe our construction to be a generalisation of quantitative realisability, allowing for characterisations of a larger family of computational complexity classes \cite{seiller-towards}.

\section{Interaction Graphs}

We start by a discussion meant to give intuitions about the basic principles at work in the interaction graphs models. We illustrate those principles by explaining the notion of \emph{thick and sliced graphs} \cite{seiller-goie}. This discussion is quite informal in that we will only provide explicit and complete definitions of the objects and operations that are essential for the understanding of this paper, to avoid overloading the reader with non-essential definitions. Indeed, the actual model uses thick and sliced \emph{graphings}, a generalisation needed to accomodate both exponentials and quantifiers. Before providing the formal definition of those at the end of the section, we discuss the notion of \enquote{graphs with states} and how it can be generalised to continuous sets of states. 

\subsection{Thick and Sliced Graphs}

The term \enquote{graph} will stand for \enquote{directed weighted graphs}, i.e. directed graphs with a weight function from the set of edges to a monoid\footnote{As we consider paths in the following, the structure of monoid is essential as it allows to define the weight of a path as the product of the weights of the edges that it is composed of.} of weights $\Omega$. Given a graph $G$, we will always denote $E^{G}$ its set of edges, $V^{G}$ its set of vertices, $t^{G}$ and $s^{G}$ its target and source maps, and $\omega^{G}$ its weight map.

The notion of \emph{thick} graphs generalises that of graphs by introducing a set of (control) states -- called a dialect. A graph $G$ with dialect $D^{G}$ is nothing more than a graph whose set of vertices is of the form $V^{G}=S^{G}\times D^{G}$ -- the set $S^{G}$ is called its \emph{support}. The set $D^{G}$ then acts as a set of control states when considering the operation of \emph{execution}, which represents the cut-elimination procedure. When working with (non-thick) graphs, this operation is represented as the computation of a graph of alternating paths; the notion of alternating path between thick graphs gives a particular role to the dialects. Indeed, an alternating path between thick graphs $G_{0}$ and $G_{1}$, with respective dialects $D^{G_{0}}$ and $D^{G_{1}}$, is a finite sequence of edges $e_{0}e_{1}\dots e_{k}$ and a sequence of triples $(s_{i},g^{(0)}_{i},g^{(1)}_{i})_{i=0}^{k+1}$ such that:
\begin{description}[noitemsep,nolistsep]
\item[(Alternation)] $e_{i}\in E^{G}$ if and only if $e_{i+1}\in E^{H}$;
\item[(States)] if $e_{i}\in E^{G_{j}}$ then $\left\{\begin{array}{lll} s^{G_{j}}(e_{i})&=&(s_{i},g^{(j)}_{i})\\ t^{G_{j}}(e_{i})&=&(s_{i+1},g^{(j)}_{i+1})\\ g^{(1-j)}_{i}&=&g^{(1-j)}_{i+1}\end{array}\right.$;
\end{description}
The interpretation of the dialect as a set of (control) states comes from the way its elements are dealt with in the above definition. Now, given two thick graphs $G,H$, the intersection of their supports represent a \emph{cut}; the result of the elimination of this cut is called the \emph{execution} of $G$ and $H$. It is defined as the thick graph $G\plug H$, of support the symmetric difference $S^{G}\Delta S^{H}$, of dialect the product $D^{G}\times D^{H}$, whose edges are exactly the alternating paths between $G$ and $H$ whose source and target lie outside of the cut. This is reminiscent of game semantics' \emph{composition and hiding}: composition corresponds here to the computation of all alternating paths, while hiding corresponds to the restriction to those paths starting and ending outside the cut. The formal definition of execution, in the framework of graphings, is given at the end of this section.

Now, \emph{thick and sliced graphs} further extend this notion of thick graphs by considering finite formal weighted sums $\sum_{i\in I^{G}} \alpha^{G}_{i}G_{i}$ where $I^{G}$ is a finite indexing set, the coefficients $\alpha^{G}_{i}$ are real numbers and $\{G_{i}\}$ is a set of thick graphs sharing the same support (but not sharing the dialects). This notion is crucial for treating additive connectives \cite{seiller-goiadd}. The operation of execution is then extended \enquote{by linearity} (although the sums are not linear combinations), letting:
$$\left(\sum_{i\in I^{G}} \alpha^{G}_{i}G_{i}\right)\plug \left(\sum_{i\in I^{H}} \alpha^{H}_{i}H_{i}\right)=\sum_{(i,j)\in I^{G}\times I^{H}} \alpha^{G}_{i}\alpha^{H}_{j}G_{i}\plug H_{j}$$

\subsection{Continuous Dialects}

Interaction graphs models dealing with exponential connectives of linear logic are based on the notion of \emph{thick and sliced graphings}, obtained by a second layer of generalisation over this notion of thick and sliced graphs \cite{seiller-phd}. While graphings will be introduced formally in the next section (Definition \ref{graphing}), we provide an intuitive description to discuss this generalisation. This discussion can be skipped in a first read, as only the formal definition of microcosm is needed to follow the next section.

Graphings are in some sense \emph{geometric realisations} of graphs on a measure space $\measured{X}$. Specifically, a graphing $G$ is defined as a graph such that for each edge $e\in E^{G}$, $s^{G}(e)$ and $t^{G}(e)$ are measurable subsets of $\measured{X}$, and there is a measurable map $\phi_{e}^{G}: s^{G}(e)\rightarrow t^{G}(e)$ which \emph{realises} $e$. As for graphs, one can define \emph{thick and sliced graphings} by first defining thick graphings -- graphings with a dialect, then consider formal weighted sums of those. It is natural, while working with graphings, to consider dialects themselves as measure spaces, and more precisely (finite) discrete probability spaces. A thick graphing of dialect $\measured{D}$ is then easily described as a graphing over the measure space $\measured{X}\times\measured{D}$.

The purpose of the current work is to extend this definition to allow for \emph{continuous dialects}, i.e. continuous measure spaces in place of discrete ones. We will show how to define in this setting the interpretation of second order linear logic without hindering the \enquote{quantitative} features of the interaction graphs construction. This however comes with a small drawback in the form of a minor complexification of the framework, which we now explain. 

We did not dwell on this point earlier, but thick graphs (graphs with dialects) are considered \emph{up to}\footnote{It is important to remark here that we don't consider the set of graphings quotiented modulo renaming, but we want to be able to formalise this notion of equivalence.} renaming of their dialect; a thick graph $G$ which is a dialect-renaming of a thick graph $F$ is called a \emph{variant} of $F$ (Definition \ref{variant}). To define correctly this notion of variant one needs to consider bijections between the dialects. However, when considering graphings and replacing the dialects with possibly continuous probability spaces, we face a problem when considering the following two probability spaces: $\mathbf{k}=\{1,\dots,k\}$ with discrete measure, and $\measured{[0,1]}$ with Lebesgue measure. Indeed, any thick graphing $G$ with dialect $\measured{k}$ has a variant $H$ with dialect $\measured{[0,1]}$: each element $i\in \measured{k}$ is represented by the interval $I_{i}=[i/(k+1),(i+1)/(k+1)]$, and an edge of source $(v,i)$ and target $(v',j)$ realised by a map $\phi: (v,i)\rightarrow (v',j)$ in $G$ is realised in $H$ by $\phi_{1}\times T_{i,j}$ where $T_{i,j}$ is the translation $x\mapsto x+(j-i)/(k+1)$ and $\phi_{1}$ is the map $v\rightarrow v'$ underlying\footnote{Since $\measured{k}$ is discrete, any measurable map $\phi:(v,i)\rightarrow (v,j)$ is defined from a measurable map $\phi_{1}:v\rightarrow v'$ by $\phi(x,i)=(\phi_{1}(x),j)$.} $\phi$. However, this cannot be formalised through an adequate notion of bijection: here we would expect \emph{Borel isomorphisms} since we work with measure spaces, but no such isomorphism exists between $\measured{k}$ and $\measuredunitint$. To avoid these troubles, we will therefore consider all our dialects to be isomorphic to $\measuredunitint$ with its Lebesgue measure. Since, as we just explained, a graphing with discrete dialect always has a \enquote{variant} with $\measuredunitint$ as dialect, and since thick graphings are considered \emph{up to} renaming, this restriction is seamless.

The second change from earlier work \cite{seiller-goig} is that we need to consider an extension of the notion of \emph{microcosm}. A microcosm $\microcosm{m}$ was defined as a monoid of measurable maps $\measured{X}\rightarrow\measured{X}$ used to consider \enquote{restrictions} of the model to $\microcosm{m}$-graphings: graphings whose realisers -- i.e.\ the maps that realises edges -- are restrictions of maps in $\microcosm{m}$. This original notion of microcosm did not incorporate the dialect. This is explained by the fact that the latter was discrete, and therefore any measurable maps realising an edge in a thick graphing could be described as a product of a measurable maps from $\measured{X}$ to $\measured{X}$ with a partial bijection on the dialect. Now that we allow for continuous dialects, one can consider realisers of edges that do not simply arise in this way from\footnote{As an example, one can consider the \emph{exchange} map defined below (Definition \ref{microdef}) and which is needed to interpret both digging and dereliction.} a map $\measured{X}\rightarrow\measured{X}$. The following definition therefore adapts (in fact extends) the previously considered notion of microcosm in a very natural way in order to incorporate this change. Let us stress that for technical reasons discussed in earlier work \cite{seiller-goig}, the measurable maps considered should be non-singular transformations\footnote{Let $\measured{X}=(X,{\mathcal B},\mu)$ be a measure space. A measurable map $f:X\rightarrow X$ is non-singular when $\forall A\in{\mathcal B}, \mu(f^{-1}(A))=0\Leftrightarrow \mu(A)=0$.} which are measurable-preserving, i.e. map measurable sets to measurable sets.

\begin{definition}[Microcosm]\label{microcosm}
Let $\measured{X}$ be a measure space. A \emph{microcosm} is a monoid (for the composition of functions) of measurable-preserving non-singular transformations $\measured{X}\times\measuredunitint\rightarrow\measured{X}\times\measuredunitint$.
\end{definition}

\subsection{Graphings and Exponential-Free Linear Logic}

This section is meant to recall the main results of previous work \cite{seiller-goig}, to which we refer the reader for a complete picture.
We first define weighted (thick) graphings, a generalisation of the homonymous notion considered by Adams \cite{adams} and later by Gaboriau \cite{gaboriaucost}.

\begin{definition}[Graphing]\label{graphing}
Let $\microcosm{m}$ be a microcosm, $\Omega$ a monoid of weights, $S^{G}$ a measurable subset of $\measured{X}$ and $\measured{D}^{G}$ a probability space isomorphic to $\measuredunitint$. A thick $\Omega$-weighted $\microcosm{m}$-graphing $G$ of support $S^{G}$ and dialect $\measured{D}^{G}$ is given by a set of \emph{edges} $E^{G}$ and $\forall e\in E^{G}$:
\begin{itemize}[noitemsep,nolistsep]
\item a \emph{source} $s^{G}(e)$, i.e. a measurable subset of $S^{G}\times D^{G}$;
\item a \emph{realiser} $\phi^{G}_{e}\in\microcosm{m}$ such that $\phi^{G}(s^{G}(e))\subset S^{G}\times D^{G}$;
\item a \emph{weight} $\omega^{G}(e)$.
\end{itemize}
For all edge $e\in E^{G}$, one can then define the \emph{target} $t^{G}(e)$ of $e$ as the measurable subset $\phi^{G}_{e}(s^{G}(e))$.

A graphing $G$ is \emph{dialect-free} if it does not make use of its dialect, i.e. if for all edge $e$, $\phi_{e}^{G}=\tilde{\phi}_{e}^{G}\times\textnormal{Id}_{\measured{D}^{G}}$, with $\tilde{\phi}_{e}^{G}:\measured{X}\rightarrow\measured{X}$.
\end{definition}

\begin{notation}
Let $A$ be a graphing, $B$ a Borel automorphism of $\measured{X}\times\measuredunitint$. We denote $B(A)$ the graphing whose edges are $B^{-1}\circ \phi\circ B$; up to the automorphism between $\measured{D}^{A}$ and $\measuredunitint$. When $B$ is a Borel automorphism of $\measured{X}$, we abusively denote by $B(A)$ the graphing $B\times\textnormal{Id}_{\measuredunitint}(A)$. We also denote by $A\times\textnormal{Id}_{\measuredunitint}$ the graphing of dialect $\measured{D}^{A}\times\measuredunitint$ whose edges are realised as $\phi_{e}\times\textnormal{Id}_{\measuredunitint}$.
\end{notation}

\begin{definition}[Variants]\label{variant}
Let $F$ and $G$ be graphings. If there exists a Borel automorphism $\phi:\measuredunitint\rightarrow\measuredunitint$ such that $F=\textnormal{Id}_{\measured{X}}\times\phi(G)$, we say that $F$ and $G$ are variants.
\end{definition}

Morally, graphings are sort of graphs which offer richer combinatorics since two vertices might have a non-trivial intersection without being equal. In particular, when considering paths, one should be careful about the sources and targets: a path in a graphing $G$ is a sequence of edges $\pi=e_{1}, e_{2},\dots, e_{k}$ in $E^{G}$ such that not only $s^{G}(e_{i+1})\cap t^{G}(e_{i})$ is non-negligible for every $i$, but also verifying that every sequence 
$$(\phi^{G}_{e_{i}}\circ \phi^{G}_{e_{i-1}}\circ \dots \phi^{G}_{e_{1}})(s^{G}(e_{1}))$$
 is of strictly positive measure. This path is then naturally realised as the composite $\phi_{\pi}^{G}=\phi_{e_{k}}^{G}\circ\dots\circ\phi_{e_{1}}^{G}$, and is considered with its \emph{maximal domain} $s^{G}(\pi)$, i.e. the set of all $x$ such that for all $i$, $\phi_{e_{i}}^{G}\circ\dots\circ\phi_{e_{1}}^{G}(x)\in s^{G}(e_{i+1})$, and its codomain $t^{G}(\pi)=\phi_{\pi}^{G}(s^{G}(\pi))$. The weight of $\pi$ is obviously defined as $\omega^{G}(\pi)=\omega^{G}(e_{k})\omega^{G}(e_{k-1})\dots\omega^{G}(e_{1})$ using the composition law of $\Omega$.

We can then define \emph{alternating path} between thick graphings as in the case of graphs, and introduce the operation of \emph{execution} between thick graphings, the semantic counterpart to the cut-elimination procedure. We write $\mathrm{AltPath}(F,G)$ the set of all alternating paths between two graphings $F,G$. %We here define execution only in the specific case in which the support of one graphing is included in the support of the other: this represents an \emph{application}, i.e. a modus ponens; the general case of the cut can be consulted in our earlier paper \cite{seiller-goig} or deduced from this specific case\footnote{Noticing that $(A\multimap B)\otimes (B\multimap C)\multimap (B\multimap B)\multimap (A\multimap C)$ (using the semi-distributivity law twice), one can rewrite a cut between $f\in A\multimap B$ and $g\in B\multimap C$ as an application of $f\otimes g\in (B\multimap B)\multimap (A\multimap C)$ with the axiom $a\in B\multimap B$.}.

Before defining execution, we need to introduce an additional construction on paths that will allow us to restrict them to a subset of their domain, i.e. perform the \enquote{hiding part} of game semantics' composition. Given a path $\pi$ in a graphing $G$ and a measurable subset $C$ (thought of as the cut), we define ${}^{C}[\pi]_{o}^{o}$ as the path with same realiser and weight as $\pi$, and whose source has been restricted to the measurable set $s^{G}(\pi)\cap \bar{C}\cap (\phi^{G}_{\pi})^{-1}(\bar{C})$, where $\bar{C}$ is the complement set of $C$. Intuitively, we restrict $\pi$ to the maximal subset of its domain that lies outside of $C$ and whose image through the realiser $\phi^{G}_{\pi}$ lies outside of $C$. 

\begin{definition}[Execution]
Let $F$ and $G$ be graphings with $S^{F}=V\disjun C$ and $S^{G}=C\disjun W$. Their \emph{execution} $F\plug G$ is the graphing of support $V\disjun W$ defined as the set of all restrictions ${}^{C}[\pi]_{o}^{o}$ for alternating paths $\pi\in\mathrm{AltPath}(F,G)$.
\end{definition}

\begin{example}
We consider the two one-edge graphings (without dialects or weights to be concise) $G$ and $H$ illustrated on the left-hand side of Figure \ref{examplepaths}. The edge of $G$ has source the segment $[0,2]$, target the segment $[4,6]$ and is realised by the map $x\mapsto 6-x$. The edge of $H$ has source the segment $[5,6]$, target the segment $[8,9]$ and is realised by the map $x\mapsto x+3$. The \emph{cut} is represented by the segment $[5,6]$. The execution of $G$ and $H$, illustrated on the right-hand side of Figure \ref{examplepaths}, is composed of two paths: the restriction of the edge of $G$ to the segment $[1,2]$, and the composition of the two edges.
\end{example}

\begin{figure*}
\centering
\begin{tabular}{c|c}
\begin{tikzpicture}[x=1cm,y=0.8cm]
	\draw[-] (0,-0.4) -- (1,-0.4) node [midway, below] {\scriptsize{$[0,2]$}};
	\draw[-] (3,-0.4) -- (4,-0.4) node [very near start, below] {\scriptsize{$[4,6]$}};
	\draw[-,fill=black,opacity=0.2] (1,0) .. controls (1,1.2) and (3,1.2) .. node [midway,above,opacity=1] {\scriptsize{$x\mapsto 6-x$}} (3,0) -- (2.8,0) -- (3.5,-0.2) -- (4.2,0) -- (4,0) .. controls (4,2.5) and (0,2.5) .. (0,0) -- (0,-0.2) -- (1,-0.2) -- (1,0) {};
	
	\draw[-] (3.5,-0.5) -- (4,-0.5) node [midway, below] {\scriptsize{$[5,6]$}};
	\draw[-] (5,-0.5) -- (5.5,-0.5) node [midway, above] {\scriptsize{$[8,9]$}};
	\draw[-,fill=black,opacity=0.2] (4,-0.9) .. controls (4,-1.5) and (5,-1.5) .. (5,-0.9) -- (4.9,-0.9) -- (5.25,-0.7) -- (5.6,-0.9) -- (5.5,-0.9) .. controls (5.5,-2.2) and (3.5,-2.2) .. node [midway,above,opacity=1] {\scriptsize{$x\mapsto x+3$}} (3.5,-0.9) -- (3.5,-0.7) -- (4,-0.7) -- (4,-0.9) {};
\end{tikzpicture}
&
\begin{tikzpicture}[x=1cm,y=0.8cm]
	\draw[-] (0,-0.4) -- (0.45,-0.4) node [midway, below] {\scriptsize{$[0,1]$}};
	\draw[-] (0.55,-0.4) -- (1,-0.4) node [midway, below] {\scriptsize{$[1,2]$}};
	\draw[-] (3,-0.4) -- (3.45,-0.4) node [midway, below] {\scriptsize{$[4,5]$}};
	\draw[-,fill=black,opacity=0.2] (1,0) .. controls (1,1) and (3,1) .. (3,0) -- (2.95,0) -- (3.225,-0.2) -- (3.5,0) -- (3.45,0) .. controls (3.45,1.6) and (0.55,1.6) .. node [midway,below,opacity=1] {\scriptsize{$x\mapsto 6-x$}} (0.5,0) -- (0.55,-0.2) -- (1,-0.2) -- (1,0) {};
	\draw[-,fill=black,opacity=0.2] (0.45,0) .. controls (0.45,1.7) and (3.55,1.7) .. node [midway,above,opacity=1] {\scriptsize{$x\mapsto 9-x$}} (3.55,0) -- (3.55,-0.9) .. controls (3.5,-2.2) and (5.5,-2.2) .. (5.5,-0.9) -- (5.6,-0.9) -- (5.25,-0.7) -- (4.9,-0.9) -- (5,-0.9) .. controls (5,-1.5) and (4,-1.5) .. (4,-0.9) -- (4,0) .. controls (4,2.5) and (0,2.5) .. (0,0) -- (0,-0.2) -- (0.45,-0.2) -- (0.45,0) {};
	
	\draw[-] (5,-0.5) -- (5.5,-0.5) node [midway, above] {\scriptsize{$[8,9]$}};
\end{tikzpicture}
\end{tabular}
\caption{Example of an execution between two graphings.}\label{examplepaths}
\end{figure*}

Based on the notion of alternating cycle -- defined easily from the notion of alternating paths, one defines a measurement $\meas{\cdot,\cdot}$ of couples of graphings and taking values in $\realposext$. This measurement is parametrized\footnote{We won't dwell on this choice of parameter in this paper, in order to avoid unnecessary complications. Although we here mention it for the sake of exactness, it will not play any specific role here. A fine analysis of the models would imply a consideration of specific values of $m$, but none of the results obtained in this paper depend on the choice of $m$.} by the choice of a measurable map $m:\Omega\rightarrow\realposext$. It is a quite involved work to define and study, and the results of this paper are based only on the existence of such a measurement and not its definition, so we refer the interested reader to our previous paper \cite{seiller-goig}. In the specific case of graphs -- which are graphings over a discrete space -- this measurement simply equals the sum, over the set of alternating cycles $\pi$, of $m(\omega(\pi))$ where $m$ is any map $\Omega\rightarrow\realposext$. This notion of measurement is extended to couples $(a,A)$ where $a$ is a real number (potentially infinite) and $A$ a graphing; the consideration of this additional real number -- the wager -- finds its reasons in technical details that are explained in previous papers \cite{seiller-goim,seiller-goiadd}. The resulting couples, called projects, are used to interpret proofs.

\begin{definition}[Project]
A project is a pair $\de{a}=(a,A)$ with $a\in\realposext$ and $A$ is a formal weighted sum of graphings $A=\sum_{i\in I^{A}}\alpha^{A}_{i} A_{i}$. We write $\unit{A}$ the sum $\sum_{i\in I^{A}\alpha_{i}}$.
\end{definition}

From the measurement, one defines a notion of orthogonality that accounts for linear negation. This orthogonality relation is used to define \emph{conducts}, specific sets of projects which will interpret formulas.

\begin{definition}[Orthogonality]
Two projects $\de{a}=(a,A)$ and $\de{b}=(b,B)$ of equal supports are orthogonal, denoted $\de{a}\poll{}\de{b}$, when $a\unit{B}+b\unit{A}+\meas{A,B} \neq 0,\infty$.

Given a set $T$ of projects, its orthogonal $T^{\pol}$ is defined as $\{\de{a}~|~\forall\de{b}\in T, \de{a}\poll{}\de{b}\}$. We will denote $T^{\pol\pol}$ the set $(T^{\pol})^{\pol}$.
\end{definition}

\begin{definition}[Conduct]
A \emph{conduct} $\cond{A}$ of support $V$ is a set of projects of support $V$ which is bi-orthogonally closed: $\cond{A}=\cond{A}^{\pol\pol}$.
\end{definition}

Finally, one can define a category whose objects are conducts and morphisms are projects and which is shown to interpret multiplicative-additive linear logic. We do not detail this construction since it is quite involved. However, let us point out that the resulting model is completely non-degenerate (none of the connectives or constants are identified) and does not satisfy the mix and weakening rules \cite{seiller-goiadd}.

\begin{theorem}[Seiller \cite{seiller-goig}]\label{theoremMALL}
Let $\measured{X}$ be a measure space, $\microcosm{m}$ a microcosm, $\Omega$ a monoid of weights. For all measurable map $m:\Omega\rightarrow\realposext$, conducts and projects built from $\Omega$-weighted $\microcosm{m}$-graphings, with the orthogonality defined from the measurement defined from $m$, form a model of Multiplicative-Additive Linear Logic.
\end{theorem}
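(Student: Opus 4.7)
The plan is to verify that the standard GoI-style interpretation of MALL's connectives is compatible with the graphing setting developed above. First, I would set up the interpretations of the connectives on conducts. For multiplicatives, the tensor of two conducts with disjoint supports is defined as the bi-orthogonal closure of the set of projects obtained by juxtaposing graphings (sources and realisers trivially extended to the union support) and adding wagers, with the par defined dually via De Morgan duality. For additives, one leverages the formal weighted-sum structure already built into projects: the additive conjunction is the bi-orthogonal closure of the set of sums $\alpha A + \beta B$ where $A$ and $B$ represent projects of the two argument conducts on a common support, with disjoint indexing sets so that the slices do not collapse; the additive disjunction is defined dually.

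Second, I would define the interpretation of proofs rule by rule. The axiom rule $\vdash A, A^{\pol}$ corresponds to an "identity" graphing with dialect-free edges realising a Borel automorphism between the two copies of the support, carrying trivial weight and wager $0$. The cut rule is interpreted by the execution operation $\plug$. The multiplicative and additive introduction rules each reduce to natural constructions on projects, essentially tupling, embedding into larger dialects, and formal weighted sums over disjoint index sets. It then remains to check that the constructed project lies in the bi-orthogonal closure defining the conclusion conduct, which is immediate by construction of tensor/with as bi-orthogonal closures.

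The key technical step, on which soundness of cut hinges, is the so-called \emph{trefoil property} of the measurement: for three graphings $A, B, C$ on compatible supports, one has an identity of the form
\begin{equation*}
\meas{A, B\plug C} + \meas{B, C} = \meas{A\plug B, C} + \meas{A, B}
\end{equation*}
(modulo wagers and unit sum corrections). This expresses associativity and symmetry of cut at the level of the scalar invariant used to define orthogonality, and is the sole ingredient needed to show that execution preserves orthogonality in the expected way, hence that cut maps a project of $\cond{A}\multimap\cond{B}$ and a project of $\cond{B}\multimap\cond{C}$ to one of $\cond{A}\multimap\cond{C}$.

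The main obstacle is therefore establishing the trefoil identity, which requires a careful bookkeeping of alternating cycles in the three pairwise executions and in the triple composition; in the graphing setting this becomes a delicate measure-theoretic argument controlling how the maximal-domain restriction in the definition of paths interacts with composition of realisers and with the integral defining $\meas{\cdot,\cdot}$. This is precisely what is carried out in detail in \cite{seiller-goig}, so the present theorem reduces to invoking that trefoil identity together with the routine verification of each MALL sequent calculus rule, plus a check that wagers and empty-support projects do not identify connectives or validate mix or weakening (which follows from the strict inequality $a\unit{B}+b\unit{A}+\meas{A,B}\neq 0,\infty$ in the definition of orthogonality).
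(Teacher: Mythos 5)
This theorem is imported verbatim from prior work and the present paper contains no proof of it (the section explicitly declines to detail the construction and refers to \cite{seiller-goig}), so there is no internal argument to compare against; your outline nonetheless faithfully reproduces the strategy of that reference --- connectives as bi-orthogonal closures, rule-by-rule interpretation of the sequent calculus, and the trefoil identity as the single lemma making execution compatible with orthogonality. The only caution is that the additive case is more delicate than your one-line description suggests: the wagers and the coefficients of the formal sums are precisely what make $\with$ and $\oplus$ non-degenerate, and the cut-elimination mismatch for additive cuts (resolved only up to observational equivalence, cf.\ Theorem~\ref{thmcutadd}) shows this is where the real work of \cite{seiller-goiadd,seiller-goig} lies --- but that is a matter of detail in the cited proofs, not a gap in your approach.
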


%Although we refer the reader to previous work \cite{seiller-goig} for details about the construction, we mention quickly how the connectconnectives 

\section{The model}

To describe the model, we will pick a measure space $\measured{X}$ together with a microcosm $\microcosm{ll}_{\rho}$ which are defined below. The construction we describe will not depend on the choices of $\Omega$ and $m:\Omega\rightarrow\realposN$, and therefore describes a family of quantitative models of second order linear logic.

Although the underlying space used here differs from our earlier work on exponentials \cite{seiller-goie}, both are equivalent up to a Borel automorphism. The presentation we chose to work with here has the advantage of showing more explicitly the dynamics at work, while gaining intuitions from standard work on exponentials. Indeed, we chose to work with the Hilbert cube $\measuredunitint^{\naturalN}$, underlying an intuitive correspondence\footnote{More precisely, the correspondence would be between boxes and copies of $\measuredunitint\times\measuredunitint$, cf. the definition of exponential connectives.} between \emph{boxes} used to treat exponentials in proof nets and the copies of $\measuredunitint$.

\begin{definition}[The space]
We define the measure space $\measured{X}=\realN\times\measuredunitint^{\naturalN}$, product of the real line with the Hilbert cube, endowed with its usual Borel algebra and Lebesgue measure.
\end{definition}

\begin{notation}
We will write elements of $\measured{X}$ as couples $(a,s)$, where $a\in\realN$ and $s$ is a sequence of elements in $\measuredunitint$. We will sometimes write sequences as $s\bullet s'$, i.e. as the concatenation of a finite sequence $s=(x_{1},\dots,x_{k})$ and a sequence $s'$; when $s$ contains only one element $x$ we will identify $x$ and $(x)$. When considering elements of the space $\measured{X}\times\measuredunitint$, we will use a natural extension of this notation, and write them $(a,s,e)$, with $(a,s)\in\measured{X}$ and $e\in\measuredunitint$.
\end{notation}

We now define the microcosm, denoted $\microcosm{ll}_{\rho}$, that will be used to interpret proofs. We could very well have worked with the biggest microcosm possible (the so-called \emph{macrocosm}) or any microcosm containing $\microcosm{ll}_{\rho}$. It is however more interesting to point out exactly the principles that are necessary to interpret second-order linear logic.

\begin{definition}[The microcosm]\label{microdef}
Let $\rho$ be a measure-preserving bijection $\measuredunitint^{2}\rightarrow \measuredunitint$. We define the microcosm $\microcosm{ll}_{\rho}$ as the monoid of measurable\footnote{We notice that those are all Borel automorphisms, thus in particular Borel-preserving and non-singular.} maps $\measured{X}\times\measuredunitint\rightarrow\measured{X}\times\measuredunitint$ generated by:
\begin{itemize}[nolistsep,noitemsep]
\item affine transformations on $\realN$: $A_{\lambda}^{\alpha}: (x,s,e)\mapsto (\alpha x+\lambda,s,e)$; 
\item (finite) permutations on $\measuredunitint^{\naturalN}$: $P_{\sigma}: (x,s,e)\mapsto (x,\sigma(s),e)$;
\item the map $D_{\rho}: (a,(x,y)\bullet s,e)\mapsto (a,\rho(x,y)\bullet s,e)$ and its inverse;
\item the \emph{exchange} $\exchange$: $(a,x\bullet s,e)\mapsto (a, e\bullet s, x)$
\end{itemize}
\end{definition}

%\begin{remark}
%The actual interpretation of proofs will also make use of a measure-preserving bijection $\theta:\measuredunitint^{3}\rightarrow\measuredunitint$. This measure-preserving bijection can be constructed from $\rho:\measuredunitint^{2}\rightarrow\measuredunitint$ by composition, e.g. $\theta=\rho\circ (\rho\times\textnormal{Id}_{\measuredunitint})$, and the microcosm $\microcosm{ll}_{\rho}$ defined above therefore contains such a map already. However, the interpretation of proofs could very well use a completely independent map $\theta$; this would give a more general construction of models of linear logic but we chose here the more compact construction where $\theta$ is defined as $\rho\circ (\rho\times\textnormal{Id}_{\measuredunitint})$.
%\end{remark}

Notice that the exchange $\exchange$ is an example of map that could not arise from a microcosm of maps from $\measured{X}$ to itself. This added principle is crucial for the definition of both dereliction and digging. Intuititvely, the microcosm of Definition \ref{microdef} \emph{without the exchange map} allows for Elementary Linear Logic\footnote{\label{footnote}To be more exact, the microcosm allowing for a model of ELL is the microcosm $\microcosm{ll}_\rho$ without the exchange but with the maps $D_{\sigma}$ which permute the family of intervals $\{[(i-1)/k,i/k]\}_{i=1}^{k}$ in the dialect along a permutation $\sigma$ of $\{1,\dots,k\}$. Without these maps, one cannot define contraction as one cannot represent \emph{slice-changing edges} \cite{seiller-goie}; it is not necessary to have all of them, though, as for instance all such $D_{\sigma}$ for permutations $\sigma$ over sets $\{1,\dots, 2^{p}\}$ are enough. Notice that these maps -- in the case $k=2^{p}$ -- are elements of $\microcosm{ll}_\rho$, defined as $D_{\sigma}=\exchange\circ\rho_{(p)}\circ P_{\sigma}\circ\rho^{-1}_{(p)}\circ\exchange$, where $\rho_{(p)}$ is recursively defined by:\newline
\begin{center}\vspace{-0.4cm}$\rho_{(0)}=\rho\hspace{2cm}\rho_{(p)}=\rho_{(p-1)}\circ(\prod_{i=1}^{2^{p-1}}\rho)$\end{center}\vspace{-0.4cm}}, in the same spirit as our previous work on exponentials \cite{seiller-goie}; the added principle -- the exchange -- adds both dereliction and digging simultaneously%\footnote{Fortunately! Obtaining a model with exactly one of those two principles is incoherent \cite{blindspot}.}
.

\begin{remark}
One actually considers thick and sliced graphings up to a larger equivalence than that of variants. Indeed, the sliced and thick graphing $\sum_{i=1}^{k} \frac{1}{k}A_{i}$ is considered equivalent to the \emph{universal}\footnote{This is the \emph{smallest} such graphing, i.e. if $H'$ also satisfies this property, then $H$ is included in $H'$.} graphing $H$ whose restriction to the part of the dialect $[(i-1)/k,i/k]$ is equal to $A_{i}$, modulo the affine transformation $[(i-1)/k,i/k]\rightarrow [0,1], x\mapsto (x\times k)-i+1$.
\end{remark}
%\note{Identification between $\sum_{i=1}^{k} A_{i}$ where $A_{i}$ is of dialect $[0,1]$ and the graphing $G$ of dialect $[0,1]$ such that the restriction of $G$ to the dialect $[(i-1)/k,i/k]$ is equal to $A_{i}$.}

By Theorem \ref{theoremMALL}, we know that for any choices of $\Omega$ and $m$, the induced model interprets \MALL. We will thus concentrate on exponential connectives here and refer the interested reader to earlier papers for the definition of \MALL connectives.

\section{The Exponentials}

We now define the perennisation, that is the operation turning a project $\de{a}$ into a project  $\oc \de{a}$ that can be duplicated. Indeed, in the interaction graphs models a project $\de{ctr}$ interpreting the contraction of arguments can be defined but it actually implements contraction, i.e.\ satifsfies $\de{ctr}\plug \de{a}\equiv\de{a}\otimes \de{a}$ for a natural notion of equivalence $\equiv$, \emph{only if} the graphing $G$ is dialect-free \cite{seiller-phd,seiller-goie}. Thus the need for a \emph{perennisation} operation that turns a project $\de{a}$  into a dialect-free project $\oc \de{a}$; this operation will in turn, when lifted to conducts, define the exponential connective.

In order to preserve all information contained in the dialect, the operation on graphings that underlies the perennisation will encode the information contained in the dialect in the support of the graphing. It is to be noted that the perennisation operation is not defined on all projects, but on the subset of so-called \emph{balanced} projects. These are in particular projects whose dialect is equal to $\measuredunitint$ or, by extension, whose graphings are \enquote{balanced} sums: $\sum_{i=1}^{k} \alpha_{i}A_{i}$ such that $\alpha_{i}=1/k$ for all $i$. This does not hinder the interpretation of proofs since projects arising from such an interpretation will all satisfy these conditions. Once this restriction is considered, the operation itself is easy to define: from a balanced project $(0,A)$ we construct $(0,\oc A)$ where $\oc A$ is the graphing obtained by \enquote{pushing} the dialect of $A$ into the first slot of the sequence $\measuredunitint^{\naturalN}$. For technical reasons explained later, we also need to create a fresh new copy of $\measuredunitint$ that will be used for implementing the promotion rule.

\begin{definition}
A project $\de{a}=(a,A)$ is \emph{balanced} if $a=0$ and the dialect of $A$ is $\measuredunitint$. If $E$ is a set of projects, we write $\balanced{E}$ the subset of balanced projects in $E$.
\end{definition}

%Notice this differs from the former definition of balanced projects that appears in our work on Elementary Linear Logic \cite{seiller-goie}. It is needed to consider that balanced project have a continuous dialect $\measuredunitint$ since some projects used to interpret proofs (e.g. dereliction and digging) will have continuous dialects. Some projects, however, may be built from a graphing $G$ having a discrete dialect $D$, e.g. the interpretation of exponential-free proofs. As already noted above, those have variants $G'$ of continuous dialect $\measuredunitint$, hence this notion of balanced project thus extend the notion introduced in previous work.

In order to define exponentials, we will need the following map:
$$B:\left\{\begin{array}{rcl}\measured{X}\times\measuredunitint&\rightarrow&\measured{X}\\
					(a,s,d)&\mapsto&(a,d\bullet s)\end{array}\right.$$
This map $B$ will be our way of encoding the dialect of $A$ in the support of the graphing $\oc A$. This way, the resulting graphing $\oc A$ will contain the exact same information as $A$, but will be dialect-free. Though it might seem a transparent and useless operation, the fact that the dialect is now part of the support makes the graphings $\oc A$ and $A$ behave quite differently when put into interaction with other projects. Intuitively, while the dialect is something private -- e.g. control states -- the support is not, and some projects might interact with $\oc A$ non-uniformly w.r.t. the former dialect of $A$. 

\begin{example}\label{exampleoc}
We consider two graphings, say $G$ and $H$, both though of\footnote{Recall that we are actually working with \enquote{variants} $G^{c}$ and $H^{c}$ whose dialect are $\measuredunitint$.} as graphings with dialects $\{1,2\}$. Now, suppose that $G$ and $H$ are of type $\cond{A}$ and $\cond{A\multimap B}$ respectively. Their execution $F\plug G$ is then of type $\cond{B}$, and its dialect should be thought of as $\{1,2\}\times\{1,2\}$. We can also consider $\oc G$ and $\oc H$, which are of respective types $\cond{\oc A}$ and $\cond{\oc(A\multimap B)}$, and their execution $\oc G\plug \oc H$. Let us explain why the latter cannot be of type $\cond{\oc B}$. Figure \ref{figureexampleoc} illustrates this situation with examples of graphings $G$, $\oc G$, $H$, $\oc H$, as well as lists of the edges (alternating paths) of $G\plug H$ and $\oc G\plug \oc H$.

The execution of $\oc G$ and $\oc H$ actually produces the graphing defined as follows: compute the execution of $G$ and $H$ \emph{as if they did not have any dialect}, and then take the perennisation of the result. In other words, the only alternating paths computed between $G$ and $H$ are those where the states of $G$ and $H$ are equal: this creates new paths (pictured in red path in Figure \ref{figureexampleoc}), this deletes paths (the blue paths), and leaves some of them \enquote{unchanged}. As a consequence, we cannot prove that $\oc G\plug \oc H$ is of type $\cond{\oc B}$ since the only graphing we know for sure to belong to this type is $\oc(G\plug H)$.
\end{example}

\begin{figure*}
\centering
\subfloat[Graphing $G$]{
\begin{tikzpicture}
	\node (A1) at (0,0) {A1};
	\node (A2) at (0,2) {A2};
	\node (B1) at (2,0) {B1};
	\node (B2) at (2,2) {B2};

	\draw[dotted] (-1,0.5) -- (3,0.5) -- (3,-0.5) -- (-1,-0.5) -- (-1,0.5) node [midway,above,sloped] {Slice 1};
	\draw[dotted] (-1,2.5) -- (3,2.5) -- (3,1.5) -- (-1,1.5) -- (-1,2.5) node [midway,above,sloped] {Slice 2};
	\draw[-,fill=black,opacity=0.1] (-1,-0.5) -- (-1,2.5) -- (3,2.5) -- (3,-0.5) -- (-1,-0.5) {};
	\draw[->] (A2) -- (B1) node [midway,right] {$g_1$};
	\draw[->] (B1) -- (A1) node [midway,below] {$g_2$};
	\draw[->] (A1) -- (A2) node [midway,left] {$g_3$};
	\draw[->] (B2) .. controls (1.5,1.5) and (2.5,1.5) .. (B2) node [midway,below] {$g_4$};
\end{tikzpicture}
}\hspace{0.5cm}
\subfloat[Graphing $H$]{
\begin{tikzpicture}
	\node (B1) at (6,0) {B1};
	\node (B2) at (6,2) {B2};
	\node (C1) at (8,0) {C1};
	\node (C2) at (8,2) {C2};
	
	\draw[dotted] (4.5,0.5) -- (8.6,0.5) -- (8.6,-0.5) -- (4.5,-0.5)-- (4.5,0.5)  node [midway,above,sloped] {Slice 1};
	\draw[dotted] (4.5,2.5) -- (8.6,2.5) -- (8.6,1.5) -- (4.5,1.5)-- (4.5,2.5) node [midway,above,sloped] {Slice 2};
	\draw[-,fill=black,opacity=0.1] (4.5,-0.5) -- (4.5,2.5) -- (8.6,2.5) -- (8.6,-0.5) -- (4.5,-0.5) {};
	\draw[->] (C2) .. controls (8,1.5) and (6,1.5) .. (B2) node [midway,below] {$h_1$};
	\draw[<-] (C2) .. controls (8,2.5) and (6,2.5) .. (B2) node [midway,below] {$h_2$};
	\draw[->] (C1) .. controls (8,-0.5) and (6,-0.5) .. (B1) node [midway,above] {$h_3$};
	\draw[<-] (C1) .. controls (8,0.5) and (6,0.5) .. (B1) node [midway,above] {$h_4$};
	\draw[->] (B1) .. controls (5,0) and (5,2) .. (B2) node [midway,left] {$h_5$};
\end{tikzpicture}
}\hspace{0.5cm}
\subfloat[Edges of $G\plug H$]{
\begin{tikzpicture}
	\node (1) at (2,1.5) {\textcolor{blue}{$g_1 h_2$}};
	\node (2) at (0,1.5) {$g_1 h_4$};
	\node (3) at (2,1) {};
	\node (4) at (0,1) {$h_1 g_4 h_2$};
	\node (5) at (2,0.5) {$g_3$};
	\node (6) at (0,0.5) {\textcolor{blue}{$h_1 g_2$}};
	\node (7) at (2,0) {\textcolor{blue}{$h_3 g_4 h_4$}};
	\node (8) at (0,0) {$h_3 g_2$};
	\node (SW) at (-1,-0.7) {};
	\node (NE) at (2.5,2.2) {};
\end{tikzpicture}
}
\\\hspace{0.5cm}
\subfloat[Graphing $\oc G$]{
\begin{tikzpicture}	
	\node (A1B) at (-0.3,-3) {A1};
	\node (A2B) at (0.3,-3) {A2};
	\node (B1B) at (1.7,-3) {B1};
	\node (B2B) at (2.3,-3) {B2};
	
	\draw[-,fill=black,opacity=0.1] (-1,-4.5) -- (-1,-1.5) -- (3,-1.5) -- (3,-4.5) -- (-1,-4.5) {};
	\draw[->] (A2B) -- (B1B) node [midway,below] {$\oc g_1$};
	\draw[->] (B1B) .. controls (1.8,-2.5) and (-0.2,-2.5) .. (A1B) node [midway,above] {$\oc g_2$};
	\draw[->] (A1B) .. controls (-0.2,-3.5) and (0.2,-3.5) .. (A2B) node [midway,below] {$\oc g_3$};
	\draw[->] (B2B) .. controls (1.8,-2.5) and (2.8,-2.5) .. (B2B) node [midway,above] {$\oc g_4$};
\end{tikzpicture}
}\hspace{1cm}
\subfloat[Graphing $\oc H$]{
\begin{tikzpicture}
	\node (B1B) at (5.7,-3) {B1};
	\node (B2B) at (6.3,-3) {B2};
	\node (C1B) at (7.7,-3) {C1};
	\node (C2B) at (8.3,-3) {C2};
	
	\draw[-,fill=black,opacity=0.1] (4.8,-4.5) -- (4.8,-1.5) -- (8.9,-1.5) -- (8.9,-4.5) -- (4.8,-4.5) {};
	\draw[->] (8.2,-2.8) .. controls (8.2,-2.2) and (6.4,-2.2) .. (6.4,-2.8) node [midway,below] {$\oc h_1$};
	\draw[<-] (8.3,-2.8) .. controls (8.3,-1.5) and (6.3,-1.5) .. (6.3,-2.8) node [midway,below] {$\oc h_2$};
	\draw[->] (7.6,-3.2) .. controls (7.6,-3.8) and (5.8,-3.8) .. (5.8,-3.2) node [midway,above] {$\oc h_3$};
	\draw[<-] (7.8,-3.2) .. controls (7.8,-4.5) and (5.6,-4.5) .. (5.6,-3.2) node [midway,above] {$\oc h_4$};
	\draw[->] (B1B) .. controls (5.7,-2) and (6.2,-2) .. (6.2,-2.8) node [midway,above] {$\oc h_5$};
\end{tikzpicture}
}\hspace{0.5cm}
\subfloat[Edges of $\oc G \plug \oc H$]{
\begin{tikzpicture}
	\node (1) at (0,1) {$\oc h_1 \oc g_4 \oc h_2$};
	\node (2) at (0,1.5) {$\oc g_1 \oc h_4$};
	\node (3) at (2,1) {\textcolor{red}{$\oc g_1 \oc h_5 \oc g_4 \oc h_2$}};
	\node (4) at (2,0.5) {$\oc g_3$};
	\node (5) at (0,0) {$\oc h_3 \oc g_2$};
	\node (SW) at (-1,-0.7) {};
	\node (NE) at (2.5,2.2) {};
\end{tikzpicture}
}
\caption{Illustration of Example \ref{exampleoc} with graphing seens as graphs.}
\end{figure*}

%The perennisation works as follows: we take a graphing $A$ with dialect $\measuredunitint$. Its edges are therefore realised as measurable maps $\measured{X}\times\measuredunitint\rightarrow \measured{X}\times\measuredunitint$. We then consider the graphing $A^{\uparrow}$ with dialect $\measuredunitint\times\measuredunitint$ whose edges $e\in E^{A^{\uparrow}}=E^{A}$ are realised as $\phi^{A}_{e}\times\measuredunitint$, i.e. as measurable maps $\measured{X}\times\measuredunitint^{2}\rightarrow\measured{X}\times\measuredunitint^{2}$. Finally, we define the perennialized graphing $\oc A$ as the graphing obtained from \enquote{burying} the unit square $\measuredunitint$ into $\measured{X}$ by using $B$, i.e. $\oc A$ is defined as the graphing whose edges $e\in E^{\oc A}=E^{A}$ are realised as $\phi^{\oc A}=B^{-1}\circ B^{-1}\circ \phi^{A^{\uparrow}}\circ B\circ B$.

\begin{definition}[Perennisation]
Let $\de{a}=(0,A)$ be a balanced project. We define its \emph{perennisation} $\de{\oc a}=(0,\oc A)$ by considering the dialect-free graphing $\oc A= B^{2}(A\times \textnormal{Id}_{\measuredunitint})$.
\end{definition}

\begin{definition}[Exponentials]
Let $\cond{A}$ be a conduct. We define the perennial conduct $\cond{\oc A}$ as the bi-orthogonal closure $\cond{\oc A}=\cond{(\sharp A)^{\pol\pol}}$ where $\cond{\sharp A}$ is the set
$$\cond{\sharp A}=\{\de{\oc a}~|~\de{a}\in\balanced{\cond{A}}\}$$
\end{definition}

%\textcolor{red}{Other design choice.\\
%We can chose to work with graphings whose vertices are realised as measurable subsets of $\realN\times\measuredunitint^{\naturalN}$, but we could as well use measurable subsets of $\realN\times\measuredunitint^{k}$, where $k$ can be any integer. This would make the definition of perennisation particularly compact (it would simply be mapping a graphing of dialect $\measuredunitint$ to the same graphing with dialect $\{\star\}$. Functorial promotion is not completely obvious since it needs to go through some isomorphisms between the unit interval and the unit square; this can be dealt with by considering $\measuredunitint\times\measuredunitint$ instead of $\measuredunitint$ in the base measure space. This would not change anything though. We chose not to follow this design choice here as it yields a sort of \emph{non-uniform} model, i.e. the stratification is somehow encoded in the base space in which some specific locations are designed to represent specific proofs/formulas.
%}

\section{A Model of Full Linear Logic}

We already know from previous work that the model just described is a model of multiplicative-additive linear logic with second-order quantification \cite{seiller-goig}. To ensure that we have a sound interpretation of exponential connectives, we will show that the following principles can be implemented:
\begin{itemize}[nolistsep,noitemsep]
\item functorial promotion $(\oc A\otimes \oc(A\multimap B))\multimap \oc B$;
\item dereliction $\oc A\multimap A$;
\item digging $\oc A\multimap \oc \oc A$.
\end{itemize}
The principle of \emph{contraction} $\oc A\multimap \oc A\otimes \oc A$ does not appear in this list as it holds for every possible definition of perennisation\footnote{As explained in Footnote \ref{footnote}, the microcosm already contains all the needed maps to define contraction.}. Let us notice moreover that the principle of functorial promotion was already shown to hold in our earlier work on exponentials \cite{seiller-goie}. We will however use here a less involved method for defining exponentials and implementing functorial promotion. The principles at work are more or less the same as in our earlier work, but this new implementation -- inspired from recent work on complexity \cite{seiller-towards} -- offers a clearer picture.

The change of perspective illustrated in Example \ref{exampleoc} is at the heart of the question of implementing functorial promotion. We want to \enquote{simulate} the disjointness of dialects. This is done in two steps: first make the encodings (in $\oc G$ and $\oc H$) of the dialects of $G$ and $H$ disjoint, by linking $\oc G$ and $\oc H$ through the permutation exchanging the two first copies of $[0,1]$. This corresponds to encoding the dialect of one of the two graphings on the second copy of $[0,1]$ instead of the first. Then we compute the result of this execution, obtaining a graphing which is almost $\oc (G\plug H)$ except for the fact that its dialect is encoded on the two first copies of $[0,1]$ and not only on the first. We then use a specific graphing that will use the map $D_{\rho}$ to encode this dialect on the first copy only.

\begin{theorem}\label{functorial}
Functorial Promotion holds.
\end{theorem}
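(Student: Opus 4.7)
The plan is to exhibit an explicit project $\de{fprom}$ of the appropriate support and verify membership in the conduct $\cond{(\oc A\otimes\oc(A\multimap B))\multimap \oc B}$ by reducing to the generating subset. By the biorthogonal closure defining $\oc$, it suffices to show that $\de{fprom}\plug(\oc\de{a}\otimes\oc\de{f})\in\cond{\oc B}$ for every $\de{a}\in\balanced{\cond{A}}$ and $\de{f}\in\balanced{\cond{A\multimap B}}$; the target is that this execution is equivalent to $\oc(\de{a}\plug\de{f})$, which lies in $\cond{\sharp B}\subseteq \cond{\oc B}$ since $\de{a}\plug\de{f}\in\cond{B}$.

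I would define $\de{fprom}$ as the dialect-free graphing implementing the two-step recipe sketched just before the theorem. On the port dual to $\oc(A\multimap B)$ it acts by wires whose realisers are precomposed with the permutation $P_\sigma\in\microcosm{ll}_\rho$ swapping the first two copies of $\measuredunitint$ in the Hilbert cube, so that the support-encoded dialects of $\oc\de{a}$ and $\oc\de{f}$ come to live on disjoint coordinates after execution; on the output port towards $\oc B$ it incorporates the map $D_\rho$, which compresses the resulting two-coordinate dialect encoding back into a single coordinate, restoring the $\oc(\cdot)$ shape.

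Next I would compute $\de{fprom}\plug(\oc\de{a}\otimes\oc\de{f})$ explicitly. Once $P_\sigma$ has separated the encoded dialects onto distinct slots of the Hilbert cube, the alternating paths between $\oc\de{a}$ and the rewired $\oc\de{f}$ are in weight-preserving bijection with the alternating paths underlying $\de{a}\plug\de{f}$: this is precisely the reverse of the pathological behaviour isolated in Example \ref{exampleoc}, now exploited positively, since the forced identification of state coordinates no longer occurs. After $D_\rho$ merges the residual two-coordinate dialect into one, the outcome coincides---up to variants and the equivalence discussed in the Remark---with $B^{2}((\de{a}\plug\de{f})\times\textnormal{Id}_{\measuredunitint})=\oc(\de{a}\plug\de{f})$, as desired.

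The main obstacle is the careful bookkeeping of dialect encodings when computing alternating paths through compositions involving $P_\sigma$ and $D_\rho$: one must verify that sources, targets and weights of these paths transform compatibly with the bijection above, and that the measure-preservation of $P_\sigma$ and $D_\rho$ guarantees invariance of the measurement $\meas{\cdot,\cdot}$ used in the orthogonality test. Granted this, the biorthogonal extension from generators $\oc\de{a}\otimes\oc\de{f}$ to arbitrary inhabitants of $\cond{\oc A\otimes\oc(A\multimap B)}$ is standard, using the adjointness between $\plug$ and $\pol$ together with the fact that $P_\sigma$ and $D_\rho$ induce weight-preserving bijections on alternating cycles. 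This yields $\de{fprom}\in\cond{(\oc A\otimes\oc(A\multimap B))\multimap\oc B}$, which is the content of the theorem.
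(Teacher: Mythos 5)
Your overall strategy is the same as the paper's: exhibit a project $\de{prom}$, reduce membership in $\cond{(\oc A\otimes\oc(A\multimap B))\multimap\oc B}$ to the generators $\de{\oc a}$, $\de{\oc f}$ via biorthogonal closure, compute the execution explicitly, and identify the result with $\oc D$ for $D$ a variant of $A\plug F$. The gap is in the choice of the two maps. Recall that $\oc A=B^{2}(A\times\textnormal{Id}_{\measuredunitint})$ reserves \emph{two} coordinates of the Hilbert cube: the first carries the encoded dialect and the second is a fresh, unused copy of $\measuredunitint$ --- and this fresh copy is part of the shape that any graphing of the form $\oc D$ must have. If you rewire $\oc\de{f}$ with the plain permutation $P_{\sigma}$ swapping coordinates $1$ and $2$, then after execution coordinate $1$ hosts the dialect of $A$ and coordinate $2$ hosts the dialect of $F$: \emph{both} fresh copies have been consumed. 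Since $\rho$ is a bijection, compressing $(x,y)$ into $\rho(x,y)$ leaves no residual information with which to repopulate coordinate $2$, so no bijective ``merge'' can restore the required unused slot; and the specific map $D_{\rho}$ you propose fills coordinate $2$ by shifting the whole tail of the sequence down by one, so the outcome does not even have the support of $\cond{\oc B}$ (which is $S^{B}$ prefixed by two free coordinates, not one). Neither variants (which only rename the dialect) nor the equivalence of the Remark (which concerns sliced sums) can absorb this mismatch, so your computed execution is not an element of $\cond{\oc B}$.

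The paper's maps are designed precisely to dodge this conservation-of-freshness problem: $\textnormal{twist}:(\lambda,(x,\rho(y,z))\bullet s)\mapsto(\lambda,(y,\rho(x,z))\bullet s)$ relocates the dialect of $\oc F$ not onto the whole of coordinate $2$ but onto the first $\rho$-half of it, leaving the $z$-half untouched and fresh throughout the execution; then $\textnormal{merge}:(\lambda,(x,\rho(y,z))\bullet s)\mapsto(\lambda,(\rho(x,y),z)\bullet s)$ folds coordinate $1$ together with that half into coordinate $1$ and promotes the surviving fresh half $z$ to coordinate $2$, without shifting the tail. This subdivision of the reserved coordinate via $\rho$ is the essential technical content of the proof; what you describe follows the paper's informal motivating paragraph but, taken literally, would not yield a graphing of the form $\oc D$, so the final identification step of your argument fails.
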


\begin{proof}[Proof (Sketch)]
The proof is much simpler in this setting than in our previous work on exponentials \cite{seiller-goie}. The principle is however quite the same: we use a first map to ensure the disjointness of the two \enquote{public dialects}, and then we use a second map that will merge both copies. I.e. we define the maps:
$$
\begin{array}{rcrcl}
\textnormal{twist}&:& (\lambda,(x,\rho(y,z))\bullet s)&\mapsto& (\lambda,(y,\rho(x,z))\bullet s)\\
\textnormal{merge}&:& (\lambda, (x, \rho(y,z))\bullet s)&\mapsto& (\lambda,(\rho(x,y),z)\bullet s)
\end{array}
$$
To prove the result, we exhibit a project $\de{prom}$ and show that $\de{prom}\in\cond{\oc A\otimes\oc (A\multimap B)\multimap\oc B}$. For this, we show that for all $\de{\oc a}=(0,\oc A)\in\cond{\oc A}$ and $\de{\oc f}=(0,\oc F)\in\cond{\oc (A\multimap B)}$, we have
$$\de{prom}\plug\de{\oc a}\plug\de{\oc f}=(0,\textnormal{merge}(\oc A\plug \textnormal{twist}(\oc F))$$
Finally, one easily checks that $\textnormal{merge}(\oc A\plug \textnormal{twist}(\oc F))$ is equal to $\oc D$ where $D$ is a variant of $F\plug A$.
\end{proof}

Both digging and dereliction will work based on the simple idea that a continuous dialect $\measuredunitint$ can be exchanged with a copy of $\measuredunitint$ appearing in the Hilbert cube. This is exactly the computational principle encapsulated in the exchange map $\exchange$. This implies that the \emph{potential infinite} of dialects -- i.e. the fact that a dialect can be any finite set, without bounds on its cardinality -- can be managed within the projects themselves, something that could not be done in earlier constructions.
%For instance, if a finite dialect $\mathbf{k}:=\{1,\dots,k\}$ is encoded into $\measuredunitint$, as explained above.
%\note{These two principles are almost inverse one of the other, which is not surprising when considering that they form the coproduct and counit of a comonad.}

\begin{theorem}\label{digging}
Digging holds.
\end{theorem}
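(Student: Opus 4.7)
The plan is to follow the template already used for Theorem \ref{functorial}: exhibit an explicit project $\de{dig}$ and verify by direct computation that, for every balanced $\de{\oc a}=(0,\oc A)\in\cond{\sharp A}$, the composite $\de{dig}\plug\de{\oc a}$ agrees, up to a variant, with $\de{\oc\oc a}$. Since $\de{\oc a}$ is dialect-free, it is itself balanced, hence $\de{\oc\oc a}$ is a well-defined element of $\cond{\sharp(\oc A)}\subseteq\cond{\oc\oc A}$, and by the universal property of the bi-orthogonal closure defining $\cond{\oc\oc A}$ this suffices to conclude $\de{dig}\in\cond{\oc A\multimap\oc\oc A}$.

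The decisive computational ingredient, as foreshadowed just before the statement, is the exchange map $\exchange$: it allows one to move a coordinate of the Hilbert cube (living in the support of a dialect-free graphing) back into the dialect slot of $\measured{X}\times\measuredunitint$, and conversely. Recall that $\oc A=B^{2}(A\times\textnormal{Id}_{\measuredunitint})$ bundles the former dialect of $A$ together with a single fresh copy of $\measuredunitint$ into the first two coordinates of the Hilbert cube of its support, whereas $\oc\oc A$ uses four such coordinates. The role of $\de{dig}$ is exactly to realise, inside the microcosm $\microcosm{ll}_{\rho}$, the bijection between the two-slot encoding of $\oc A$ and the four-slot encoding of $\oc\oc A$. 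Concretely, a suitable composite pulls the two encoded coordinates of $\oc A$ back into the dialect slot via $\exchange$, rearranges them with $D_{\rho}$, $D_{\rho}^{-1}$ and a finite permutation $P_{\sigma}$ so as to match the pattern produced by $B^{2}(\oc A\times\textnormal{Id}_{\measuredunitint})$, and then pushes the result back out with a second application of $\exchange$; closure of a microcosm under composition guarantees that such a composite is a legal realiser.

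Concretely, I would define $\de{dig}=(0,\textnormal{Dig})$ where $\textnormal{Dig}$ is a dialect-free graphing whose edges use the above composite together with an affine transformation $A_{\lambda}^{\alpha}$ witnessing the affine bijection between the two conjuncts of the support of $\cond{\oc A\multimap\oc\oc A}$. Unfolding the definition of execution and using that $\oc A$ acts trivially on its dialect factor, the alternating paths between $\textnormal{Dig}$ and $\oc A$ collapse to a reindexing of the edges of $\oc A$ that reproduces exactly the edges of $\oc\oc A$ up to a Borel automorphism of $\measuredunitint$, i.e.\ up to a variant in the sense of Definition \ref{variant}.

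The main obstacle is of a bookkeeping nature rather than a conceptual one: one has to choose the exact composite of $\exchange$, $D_{\rho}$, $D_{\rho}^{-1}$, $P_{\sigma}$ and $A_{\lambda}^{\alpha}$ so that the output coincides with $\oc\oc A$ up to variant rather than merely being equivalent in some looser sense. The essential role played by $\exchange$ should be emphasised: removing it restricts the microcosm to one that models only Elementary Linear Logic, in which digging is known to fail, as explicitly noted after Definition \ref{microdef}.
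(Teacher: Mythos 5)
Your overall strategy matches the paper's: exhibit an explicit project $\dig{A}$ whose realisers are composites of the microcosm generators --- essentially the exchange $\exchange$ together with $D_{\rho}$, $D_{\rho}^{-1}$ and a permutation --- and check by direct computation that executing it against a perennisation reproduces the double perennisation. The paper makes the bookkeeping you defer entirely concrete by packaging the composite into a single map $\textnormal{push}: (\lambda, (x,\rho(y,\rho(z,w)))\bullet s,e) \mapsto (\lambda, (e,z,x,y)\bullet s, w)$ and showing $\dig{A}\plug\de{\oc a}=(0,\textnormal{push}(\oc A))$ with $\textnormal{push}(\oc A)$ \emph{equal} to $\oc\oc A$, not merely a variant of it --- a distinction that is harmless for this theorem but matters later for the on-the-nose cut-elimination result (Theorem \ref{onthenose}).

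There is one point where your justification does not go through as written. You verify the computation only for the generators $\de{\oc a}\in\cond{\sharp A}$, which are dialect-free, and then invoke \enquote{the universal property of the bi-orthogonal closure defining $\cond{\oc\oc A}$} to conclude $\dig{A}\in\cond{\oc A\multimap\oc\oc A}$. The closure of the \emph{target} only guarantees $\cond{\sharp (\oc A)}\subseteq\cond{\oc\oc A}$, i.e.\ that the output of your computation lands in the right conduct; it does not address the fact that membership in $\cond{\oc A\multimap\oc\oc A}$ quantifies over \emph{all} elements of the source $\cond{\oc A}=\cond{(\sharp A)^{\pol\pol}}$, which in general are neither perennisations nor dialect-free. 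This is exactly the point the paper's proof flags in its parenthetical remark, and it is why $\textnormal{push}$ acts non-trivially on the dialect coordinate $e$ of its input: the computation there is carried out for arbitrary elements of $\cond{\oc A}$. Your restriction to generators can be repaired, but by a different lemma than the one you cite: one reduces testing against $\cond{(\sharp A)^{\pol\pol}}$ to testing against $\cond{\sharp A}$ via $T^{\pol}=T^{\pol\pol\pol}$ combined with the adjunction between execution and orthogonality (associativity of the measurement). Either state and use that reduction explicitly, or follow the paper and design the composite so that it correctly rearranges the dialect slot of an arbitrary input.
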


\begin{proof}[Proof (Sketch)]
As for the proof of Theorem \ref{functorial}, we exhibit an element $\dig{A}\in\cond{\oc A\multimap \oc \oc A}$. We show that, for all $\de{\oc a}=(0,\oc A)$ in $\cond{\oc A}$, one can compute $\dig{A}\plug\de{\oc a}=(0,\textnormal{push}(\oc A))$, where:
$$\textnormal{push}: (\lambda, (x,\rho(y,\rho(z,w)))\bullet s,e) \mapsto (\lambda; (e,z,x,y)\bullet s, w)$$
It is clear that $\textnormal{push}(\oc A)$ is equal to $\oc\oc A$, since the dialect of $\oc A$ (although $\oc A$ is dialect-free, not all elements of $\cond{\oc A}$ are, and therefore this is important) is encoded in the first copy of $\measuredunitint$, while the second copy remains unused (this is because the second copy of $\measuredunitint$ in $\oc A$ is unused\footnote{As this is not the case for elements of $\cond{A}$, this explains why digging is not a co-dereliction.}).
\end{proof}

The dereliction consists in \enquote{reconstructing} a dialect from a banged project. This can be performed using the same kind of tricks, i.e. using a continuous dialect.

\begin{theorem}\label{dereliction}
Dereliction holds.
\end{theorem}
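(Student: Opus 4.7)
The plan follows exactly the same pattern as the proofs of Theorems \ref{functorial} and \ref{digging}: exhibit a project $\der{A}$, compute its execution against a generic $\de{\oc a}=(0,\oc A)\in\cond{\sharp A}$, verify that the result is a variant of $\de{a}$, and extend to the whole of $\cond{\oc A}=\cond{\sharp A}^{\pol\pol}$ by bi-orthogonality.

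The computational principle is once again the exchange map $\exchange:(a,x\bullet s,e)\mapsto(a,e\bullet s,x)$. Recall that by construction $\oc A = B^{2}(A\times\textnormal{Id}_\measuredunitint)$ stores the original dialect of $A$ in the first slot of the Hilbert cube, while the (trivial) promotion copy fills the second slot. The role of $\der{A}$ will be precisely to move this encoded dialect back to the actual dialect position. I would therefore define $\der{A}$ from a graphing whose realiser is built from $\exchange$, composed with the tagging affine maps $A_\lambda^\alpha$ and with whatever permutations $P_\sigma$ (and possibly $D_\rho^{\pm 1}$) are needed to reabsorb the leftover trivial slot. Unfolding the definition of execution then yields
$$\der{A}\plug\de{\oc a}=(0,\textnormal{pull}(\oc A)),$$
with $\textnormal{pull}$ the corresponding conjugate of $\oc A$'s edges.

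The step I expect to be the main obstacle is the verification that $\textnormal{pull}(\oc A)$ is indeed a variant of $A$. The argument hinges on the fact that the second slot of $\oc A$ is trivially acted upon -- it originates from the $\times\textnormal{Id}_\measuredunitint$ factor of the perennisation -- so that, after the exchange, the two leading slots of the Hilbert cube become harmless padding that can be absorbed into the ambient Borel automorphism witnessing the variant relation. This feature is specific to perennisations and fails on arbitrary elements of $\cond{A}$, which is exactly what blocks the converse implication and explains why dereliction is only one-directional, mirroring the parallel observation in the proof of Theorem \ref{digging}. Once this variant equality is established, the conclusion by bi-orthogonality is immediate.
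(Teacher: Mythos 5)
Your proposal is correct and follows essentially the same route as the paper: the paper's proof likewise exhibits $\der{A}$ and computes $\der{A}\plug\de{\oc a}=(0,\textnormal{raise}(\oc A))$ with the explicit map $\textnormal{raise}:(\lambda,(x,y)\bullet s,e)\mapsto(\lambda,s,\rho(x,\rho(y,e)))$, which is exactly your ``pull'': an exchange-based conjugation that moves the dialect encoded in the first Hilbert-cube slot back into the dialect position and absorbs the leftover slots via $\rho$, after which one checks the result is (a variant of) $A$. The only thing you omit is this explicit formula, which is the one concrete datum the paper's sketch supplies.
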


\begin{proof}[Proof (Sketch)]
Again, we exhibit an element $\der{A}\in\cond{\oc A\multimap A}$. For this, we show that for all $\de{\oc a}\in\cond{\oc A}$, one can compute $\der{A}\plug \de{\oc a}=(0,\textnormal{raise}(\oc A))$ where
$$\textnormal{raise}:(\lambda,(x,y)\bullet s, e)\mapsto (\lambda,s,\rho(x,\rho(y,e)))$$
Again, one easily checks that $\textnormal{raise}(\oc A)$ is equal to $A$.
\end{proof}

Notice that the maps used to interpret dereliction and digging are not the only ones that satisfy the right properties, e.g. if one replaces $\textnormal{raise}$ by the map $\textnormal{raise}^{(2)}(\lambda,(x,y)\bullet s,e)\mapsto (\lambda,s,\rho(\rho(x,y),e))$, we still have $\textnormal{raise}^{(2)}$. The exact expressions are however important when to ensure that the execution soundly represents cut-elimination.

%Before going into some details about the interpretation of logic and how the model behaves w.r.t. cut-elimination, we ought to discuss what sort of models we just defined. Let us recall the structure of the derived categorical models for \MALL. There are in fact two models one can define from our constructions \cite{seiller-goiadd}: the first is the category \catmll of conducts with projects as morphisms, the second is the category \concat of conducts with equivalence classes of projects as morphisms (for the observational equivalence). Both categories are $*$-autonomous categories, and one can find a full subcategory of \concat with products, noted \behcat: here the quotient is essential because the equivalence in \autoref{thmcutadd} cannot be replaced by an equality. An equivalent way of saying this is that \behcat is the quotient of a full subcategory \catmall of \catmll which do not have products and coproducts. What we have shown in this section are results concerned with the non-quotiented category \catmll: we have shown that $\oc$ defines a functor on \catmll (\autoref{functorial}), and that this functor has the structure of a (monoidal) comonad. Indeed, the interpretation of digging and dereliction exposed in the proofs of \autoref{digging} and \autoref{dereliction} give rise to natural transformations. In the next section, we will first show a soundness result, and then study hat becomes of the constructions in the quotiented category \concat.

\section{Interpretation of proofs}\label{section:interpretation}

We first recall the notion of winning projects \cite{seiller-goie}. Winning projects are the equivalent of game semantics' winning strategies or classical realisability's proof-like terms. In particular, all interpretations of proofs will be winning projects.

\begin{definition}
A project $\de{a}=(a,A)$ is \emph{winning} if it is balanced and if $A$ is a disjoint union of transpositions, i.e. each edge $e$ in $A$ has a reverse edge $e^{\ast}$ with $\phi_{e^{\ast}}^{A}=(\phi_{e}^{A})^{-1}$ and the sources of edges are pairwise disjoint.
\end{definition}

We now recall the basics of the proof system for which we define the interpretation of proofs. We are working with three different kinds of formulas, \emph{positive}, \emph{negative} and \emph{neutral}. The technical reasons behind this are explained in our work on \ELL \cite{seiller-goie}. Intuitively, neutral formulas correspond to the fragment of linear logic which does not allow for structural rules, negative formulas are those created from a perennisation while positive formulas are duals of negative formulas. They are defined inductively through the grammar shown on Figure \ref{formulas} (neutral formulas are denoted by $B$ which stands for \emph{behavior} \cite{seiller-goie}).
\begin{figure*}
\centering
\begin{eqnarray*}
B&\!\!\!:=\!\!\!& X~|~X^{\pol}~|~\cond{0}~|~\cond{T}~|~B\otimes B~|~B\parr B~|~B\oplus B~|~B\with B~|~\forall X~B~|~\exists X~B~|~B\otimes N~|~B\parr P\\
N&\!\!\!:=\!\!\!& \cond{1}~|~\oc B~|~\oc N~|~N\otimes N~|~ N\with N~|~N\oplus N~|~N\parr P~|~\forall X~N~|~\exists X~N\\
P&\!\!\!:=\!\!\!&  \cond{\bot}~|~\wn B~|~\wn P~|~P\parr P~|~ P\with P~|~P\oplus P~|~N\otimes P~|~\forall X~P~|~\exists X~P
\end{eqnarray*}
\caption{Grammar for the formulas of $\textnormal{LL}_{\textnormal{pol}}$ (the symbol $X$ denotes variables)\label{formulas}}
\end{figure*}

\begin{definition}
A sequent $\Delta\pdash \Gamma;\Theta$ is such that $\Delta,\Theta$ contain only negative formulas, $\Theta$ containing at most one formula and $\Gamma$ containing only neutrals.
\end{definition}

\begin{definition}[The System $\textnormal{LL}_{\textnormal{pol}}$]
A proof in the system $\textnormal{LL}_{\textnormal{pol}}$ is a derivation tree constructed from the derivation rules of $\textnormal{ELL}_{\textnormal{pol}}$ \cite{seiller-goie}, which are nothing more than polarised variants of elementary linear logic sequent calculus rules -- presented with functorial promotion, extended with the rules in Figure \ref{llpol}.
\end{definition}

\begin{figure*}
\centering
\begin{tabular}{c}
\begin{tabular}{ccc}
\begin{minipage}{3.5cm}
\begin{prooftree}
\AxiomC{$\Delta, N\pdash \Gamma;\Theta$}
\RightLabel{\scriptsize{der$^{pol}$}}
\UnaryInfC{$\Delta, \oc N\pdash\Gamma;\Theta$}
\end{prooftree}
\end{minipage}
&
\begin{minipage}{3.5cm}
\begin{prooftree}
\AxiomC{$\Delta\pdash \Gamma,B;\Theta$}
\RightLabel{\scriptsize{der}}
\UnaryInfC{$\Delta, \oc B^{\pol}\pdash\Gamma;\Theta$}
\end{prooftree}
\end{minipage}
&
\begin{minipage}{3.cm}
\begin{prooftree}
\AxiomC{$\Delta, \oc \oc N\pdash \Gamma;\Theta$}
\RightLabel{\scriptsize{dig}}
\UnaryInfC{$\Delta, \oc N\pdash\Gamma;\Theta$}
\end{prooftree}
\end{minipage}
\end{tabular}
\end{tabular}
\caption{Additional Rules for Dereliction and Digging}\label{llpol}
\end{figure*}

One can then extend the inductive interpretation of proofs defined for $\textnormal{ELL}_{\textnormal{pol}}$ in earlier work \cite{seiller-goie} by interpreting the additional rules as follows: the interpretation $\Int{\pi}{}$ of a proof $\pi$ obtained from a proof $\pi'$ by using a dereliction rule (resp. a digging rule) on $\cond{\oc A}$ is defined as the execution of $\Int{\pi'}{}$ with the project $\der{A}$ (resp. $\dig{A}$).

\begin{theorem}
For every proof $\pi$ of a sequent $\Delta\pdash \Gamma;\Theta$ in $\textnormal{LL}_{\textnormal{pol}}$, the interpretation $\Int{\pi}{}$ is a winning project in $\Int{\Delta\pdash \Gamma;\Theta}{}$
\end{theorem}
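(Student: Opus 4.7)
The proof proceeds by induction on the derivation tree of $\pi$. For the rules of $\textnormal{ELL}_{\textnormal{pol}}$, soundness (both the typing part and the preservation of the winning condition) is already established in earlier work \cite{seiller-goie}, so the bulk of the argument reduces to handling the three new rules in Figure \ref{llpol}. The plan is therefore to fix the interpretation of the dereliction rule on a positive occurrence (der), the dereliction rule on a negative occurrence (der$^{pol}$), and the digging rule (dig), and to verify that in each case the resulting project is both well-typed and winning.

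For the typing part, we use Theorems \ref{dereliction} and \ref{digging}: by construction, $\der{A}$ lies in the conduct $\cond{\oc A\multimap A}$ and $\dig{A}$ in $\cond{\oc A\multimap \oc\oc A}$. If the induction hypothesis gives us a project in $\Int{\Delta,N\pdash\Gamma;\Theta}{}$, composing with $\der{N}$ by the definition of the interpretation yields a project in $\Int{\Delta,\oc N\pdash\Gamma;\Theta}{}$; the case of the positive dereliction is handled symmetrically (modulo an application of linear negation to route $B$ across the turnstile), and digging is analogous via $\dig{N}$. Soundness of the typing thus reduces entirely to knowing that $\der{N}, \der{B}$ and $\dig{N}$ inhabit the claimed exponential conducts, which is exactly the content of the three previous theorems.

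For the winning condition, we must check that balance is preserved and that the execution of a disjoint union of transpositions with $\der{A}$ or $\dig{A}$ is again a disjoint union of transpositions. The balance part is immediate, since the wagers involved are all $0$ and the dialects stay isomorphic to $\measuredunitint$ up to variant. The transposition part is where some care is needed: by the explicit descriptions of $\textnormal{raise}$ and $\textnormal{push}$ given in the proofs of Theorems \ref{dereliction} and \ref{digging}, executing a winning $\de{\oc a}=(0,\oc A)$ with $\der{A}$ (resp. $\dig{A}$) produces a graphing whose edges are obtained from the edges of $A$ by an injective measurable relabelling of sources, targets and realisers. Since injective relabelling commutes with the involution sending each edge to its reverse, and preserves pairwise disjointness of sources, the resulting graphing is again a disjoint union of transpositions. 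Thus the winning property is stable under the new rules.

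The main obstacle, in my view, is not the inductive step itself but aligning the bookkeeping of dialects through the execution: one must verify that after passing through $\der{A}$ or $\dig{A}$, the dialect of the result is indeed (a variant of) $\measuredunitint$ as required by balance, and that the various copies of $\measuredunitint$ in the Hilbert cube have been shifted exactly as claimed in the sketches of Theorems \ref{functorial}, \ref{dereliction} and \ref{digging}. Once this accounting is done uniformly, the inductive step for der, der$^{pol}$ and dig is essentially forced, and the result follows by combining with the already-known $\textnormal{ELL}_{\textnormal{pol}}$ cases from \cite{seiller-goie}.
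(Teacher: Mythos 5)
Your overall strategy is exactly the one the paper takes: induction on the derivation, with all $\textnormal{ELL}_{\textnormal{pol}}$ rules imported wholesale from \cite{seiller-goie}, so that the only new content is the dereliction and digging rules, whose typing is precisely Theorems \ref{dereliction} and \ref{digging}. The paper disposes of the new cases in one line by observing that $\der{A}$ and $\dig{A}$ are themselves \emph{winning projects}, after which the stability of winningness under execution --- already part of the $\textnormal{ELL}_{\textnormal{pol}}$ soundness machinery --- does the rest. The one place where your write-up is aimed at the wrong configuration is the winning-condition step: you verify that $\der{A}\plug\de{\oc a}$ is a disjoint union of transpositions for $\de{\oc a}=(0,\oc A)\in\cond{\oc A}$, but in the inductive step the cut partner is $\Int{\pi'}{}$, an arbitrary winning project interpreting the premiss sequent $\Delta,N\pdash\Gamma;\Theta$, not a bare element of $\cond{\oc N}$; the computation $\der{A}\plug\de{\oc a}=(0,\textnormal{raise}(\oc A))$ is what establishes membership of $\der{A}$ in $\cond{\oc A\multimap A}$, not what propagates winningness through the induction. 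The cleaner (and intended) route is to check once that $\der{A}$ and $\dig{A}$ are winning --- balanced, and realised by transpositions pairing $\textnormal{raise}$, resp. $\textnormal{push}$, with its inverse on disjoint sources --- and then invoke the general preservation lemma. This is a matter of routing the argument rather than a missing idea; the rest of your bookkeeping (wagers equal to $0$, dialects remaining variants of $\measuredunitint$) matches what the paper relies on.
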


The proof of this result is uninteresting in itself and follows exactly the proof of the same result for the restricted system $\textnormal{ELL}_{\textnormal{pol}}$ \cite{seiller-phd,seiller-goie}. The additional cases of dereliction and digging rules are completely transparent since the projects exhibited in the proofs of Theorems \ref{digging} and \ref{dereliction} are clearly winning projects.

Notice that it is an open question whether the exponential isomorphism between $\oc (A\with B)$ and $\oc A \otimes \oc B$ did hold in the ELL model\footnote{This is discussed in our earlier paper \cite{seiller-goie}, but can be understood as follows: in the non-affine sequent calculus for ELL (presented with the functorial promotion rule) one cannot prove the implication $(\oc A\otimes \oc B)\multimap\oc (A\with B)$.}. In the model we just described, however, this isomorphism holds; one just has to write down the usual derivation which can be interpreted soundly in the model.
\begin{prooftree}
\AxiomC{$\pdash A,A^{\pol{}};$}
\RightLabel{\scriptsize{der}}
\UnaryInfC{$\oc A\pdash A;$}
\RightLabel{\scriptsize{weak}}
\UnaryInfC{$\oc A,\oc B\pdash A;$}
\AxiomC{$\pdash B,B^{\pol{}};$}
\RightLabel{\scriptsize{der}}
\UnaryInfC{$\oc B\pdash B;$}
\RightLabel{\scriptsize{weak}}
\UnaryInfC{$\oc A,\oc B\pdash B;$}
\RightLabel{\scriptsize{$\with$}}
\BinaryInfC{$\oc A,\oc B\pdash A\with B;$}
\RightLabel{\scriptsize{$\oc$}}
\UnaryInfC{$\oc\oc A,\oc\oc B\pdash ;\oc (A\with B)$}
\RightLabel{\scriptsize{dig}}
\UnaryInfC{$\oc A,\oc\oc B\pdash ;\oc (A\with B)$}
\RightLabel{\scriptsize{dig}}
\UnaryInfC{$\oc A,\oc B\pdash ;\oc (A\with B)$}
\doubleLine
\UnaryInfC{$\pdash;(\oc A\otimes\oc B)\multimap\oc (A\with B)$}
\end{prooftree}

This implies not only that that two conducts $\oc A\otimes\oc B$ and $\oc (A\with B)$ are isomorphic, but that they are equal. Indeed, the inclusion $\oc (A\with B)\subseteq\oc A\otimes\oc B$ can be proved as in our earlier paper \cite{seiller-goie}. Moreover, the interpretation $\Int{\pi}{}$ of the above derivation can be shown to satisfy $\Int{\pi}{}\plug(\de{\oc a}\otimes\de{\oc b})=\de{\oc (a\with b)}$, where $\de{a\with b}$ is the usual construction of the $\with$ rule between $\de{a}$ and $\de{b}$, yielding the converse inclusion.

\begin{theorem}\label{expiso}
For any conducts $\cond{A}$ and $\cond{B}$, $\cond{\oc (A\with B)}=\cond{\oc A\otimes\oc B}$.
\end{theorem}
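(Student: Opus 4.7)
The plan is to establish the two inclusions $\cond{\oc(A\with B)}\subseteq\cond{\oc A\otimes\oc B}$ and $\cond{\oc A\otimes\oc B}\subseteq\cond{\oc(A\with B)}$ separately. The first inclusion does not require any of the new principles introduced in this paper (digging and dereliction). I would simply transpose the argument from the earlier ELL paper \cite{seiller-goie}, which shows that a project interpreting the $\with$-rule on $A$ and $B$, after perennisation, lies in the tensor of perennisations: the construction of $\de{\oc(a\with b)}$ from $\de{\oc a}$ and $\de{\oc b}$ amounts to a straightforward combination of their supports on disjoint slices of the dialect (before pushing everything into the Hilbert cube), and the interaction with tests in $\cond{\oc A\otimes\oc B}$ then measures as expected.

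For the converse inclusion, the key is to exploit the explicit derivation $\pi$ of $(\oc A\otimes\oc B)\multimap\oc(A\with B)$ that is written out just before the theorem. Since the previous soundness theorem guarantees that $\Int{\pi}{}$ is a winning project belonging to $\Int{(\oc A\otimes\oc B)\multimap\oc(A\with B)}{}$, we obtain for any $\de{\oc a}\in\cond{\oc A}$ and $\de{\oc b}\in\cond{\oc B}$ that $\Int{\pi}{}\plug(\de{\oc a}\otimes\de{\oc b})\in\cond{\oc(A\with B)}$. The crux is therefore to identify this execution with $\de{\oc(a\with b)}$, i.e.\ to verify the displayed equality $\Int{\pi}{}\plug(\de{\oc a}\otimes\de{\oc b})=\de{\oc(a\with b)}$.

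To carry out this identification, I would peel the derivation $\pi$ from the bottom up, tracking the action of each interpreted rule on the coordinates of $\measured{X}\times\measuredunitint$. Reading bottom-up: the two digging steps $\de{\oc a}\mapsto\de{\oc\oc a}$ and $\de{\oc b}\mapsto\de{\oc\oc b}$ are given by the explicit map $\textnormal{push}$ from the proof of Theorem \ref{digging}; the weakenings and the $\with$-rule combine the results on disjoint copies of the dialect (already embedded in the Hilbert cube, by the earlier digging); the dereliction steps acting on the axiom projects are described by the explicit map $\textnormal{raise}$ from the proof of Theorem \ref{dereliction}; and the final functorial promotion acts by the merge/twist maps of Theorem \ref{functorial}. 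Composing these four explicit maps and simplifying (using the defining relations of $\rho$ and of $\exchange$) should yield exactly the dialect-to-support encoding prescribed by the perennisation of $\de{a\with b}$.

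The main obstacle is this coordinate bookkeeping: ensuring that after two diggings, one $\with$-combination and one promotion, the concrete way in which the dialects of $\de{a}$ and $\de{b}$ have been pushed into the first slots of the sequence $\measuredunitint^{\naturalN}$ agrees, on the nose, with the encoding produced by applying $\oc(\cdot)$ directly to $\de{a\with b}$. Once this is verified, every $\de{\oc(a\with b)}$ of this form belongs to $\Int{\pi}{}\plug(\cond{\oc A}\otimes\cond{\oc B})$, and by bi-orthogonal closure the entire set $\cond{\sharp(A\with B)}$ is contained in $\cond{\oc A\otimes\oc B}$, giving the desired reverse inclusion and hence the equality of conducts.
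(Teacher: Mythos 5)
Your overall strategy coincides with the paper's: split the equality into two inclusions, import $\cond{\oc (A\with B)}\subseteq\cond{\oc A\otimes\oc B}$ from the earlier ELL paper, and obtain the other inclusion from the explicit derivation $\pi$ of $(\oc A\otimes\oc B)\multimap\oc(A\with B)$ together with the identity $\Int{\pi}{}\plug(\de{\oc a}\otimes\de{\oc b})=\de{\oc (a\with b)}$, which indeed has to be checked by composing the explicit realisers ($\textnormal{push}$, $\textnormal{raise}$, $\textnormal{twist}$, $\textnormal{merge}$) and tracking coordinates in $\measured{X}\times\measuredunitint$. That coordinate bookkeeping is precisely the content the paper leaves implicit behind the words \enquote{can be shown to satisfy}.

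There is, however, a genuine slip in your closing deduction. From \enquote{every $\de{\oc(a\with b)}$ belongs to $\Int{\pi}{}\plug(\cond{\oc A}\otimes\cond{\oc B})$} you conclude that $\cond{\sharp (A\with B)}\subseteq\cond{\oc A\otimes\oc B}$ and call this the \enquote{reverse inclusion}. First, that conclusion, after bi-orthogonal closure, gives $\cond{\oc(A\with B)}\subseteq\cond{\oc A\otimes\oc B}$ -- the inclusion you had already imported from the ELL paper -- so as written you prove one inclusion twice and the converse $\cond{\oc A\otimes\oc B}\subseteq\cond{\oc(A\with B)}$ not at all. Second, the step itself does not go through: a project in the image of $\Int{\pi}{}\plug(-)$ applied to $\cond{\oc A}\otimes\cond{\oc B}$ lands, by the typing of $\Int{\pi}{}$, in the codomain $\cond{\oc(A\with B)}$ and not in $\cond{\oc A\otimes\oc B}$; moreover the generators of $\cond{\oc(A\with B)}$ are the $\de{\oc c}$ for \emph{all} balanced $\de{c}\in\cond{A\with B}$, which are not all of the form $\de{a\with b}$. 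What the identity must actually be used for is to show that the generators $\de{\oc a}\otimes\de{\oc b}$ of $\cond{\oc A\otimes\oc B}$ themselves belong to $\cond{\oc(A\with B)}$ -- e.g.\ by exploiting the fact that $\Int{\pi}{}$ is a winning project, hence a disjoint union of transpositions acting as a delocation whose execution preserves orthogonality in both directions and identifies $\de{\oc a}\otimes\de{\oc b}$ with $\de{\oc(a\with b)}$ up to that delocation. That final step is missing from your write-up and needs to be supplied.
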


Before discussing the interpretation of cut-elimination in the models, we should add a few explanations about the soundness results we just obtained. Indeed, it should be stressed that the interpretation of proofs we just defined are non-trivial, i.e. that they do not identify (too much) distinct proofs. In the case of our models this is a consequence of the fact that two proofs have the same interpretation if and only if they have the same proof net \cite{linearlogic,ProofNetHilbert}. In other words, sequent calculus proofs are quotiented w.r.t. some (computationally inessential) commutations of rules.

\paragraph{What about cut-elimination?}
It is known that GoI does not represent cut-elimination exactly, i.e. it is not always the case that if $\pi'$ is the normal form of $\pi$, then $\Int{\pi'}{}=\textnormal{Ex}(\Int{\pi}{})$. It was shown that cut-elimination for \MLL is soundly represented by execution \cite{seiller-goim}, but there is a mismatch even in the exponential-free fragment because of additive cuts. This issue is discussed in details in previous work \cite{seiller-goiadd}, where we solve this problem by considering a notion of \emph{observational equivalence} $\cong$. Indeed, we showed that, even if $\Int{\pi'}{}\neq\textnormal{Ex}(\Int{\pi}{})$ in presence of an additive cut, $\Int{\pi'}{}$ and $\textnormal{Ex}(\Int{\pi}{})$ are observationally equivalent.

\begin{theorem}[Seiller \cite{seiller-goiadd}]\label{thmcutadd}
If $\pi'$ is obtained from $\pi$ by applying a step of cut-elimination ($\with$/$\oplus$) then $\Int{\pi}{}\cong\Int{\pi'}{}$.
\end{theorem}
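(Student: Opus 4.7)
The plan is to trace through the two sides of an additive cut and show that execution produces the reduct up to extra \enquote{dead} summands that are observationally inert. The setup of a $\with/\oplus$ cut is standard: $\pi$ ends in a cut between a $\with$-introduction (yielding an interpretation of the form $\frac{1}{2}\Int{\pi_L}{}+\frac{1}{2}\Int{\pi_R}{}$, where $\pi_L$ and $\pi_R$ prove the two components and the coefficients/dialects record the slice) and an $\oplus_i$-introduction (yielding $\Int{\sigma}{}$ embedded into slice $i\in\{L,R\}$, padded on the other slice with a zero or trivial component, according to the interpretation of $\oplus$ from previous work). The reduct $\pi'$ simply cuts $\pi_i$ against $\sigma$. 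First I would unfold the definition of execution on this sum and on the sliced embedding: by the \enquote{linearity} of $\plug$ on sliced graphings recalled in the section on thick and sliced graphs, the execution splits as a weighted sum of pairwise executions $G_j\plug H_k$ over pairs of slices $(j,k)$.

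The second step is to pinpoint the discrepancy. The \enquote{matching} slice $(i,i)$ produces exactly (up to dialect renaming, i.e.\ a variant) the reduct's interpretation $\Int{\pi'}{}$. The \enquote{off-diagonal} and \enquote{other diagonal} slices $(j,k)$ with $j\neq i$ or $k\neq i$ are the dead summands: they pair a component of the $\with$-sum with the trivial padding coming from the $\oplus_i$ embedding. Concretely, these summands have supports disjoint from where meaningful interaction can occur, or equivalently, their realisers factor through the \enquote{wrong} slice of the dialect that the orthogonal conduct cannot detect.

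The third step is to verify observational equivalence. Here I would use the definition of $\cong$ recalled in~\cite{seiller-goiadd}: $\de{a}\cong\de{b}$ when for every test $\de{c}$ in the ambient conduct, $\meas{A,C}$ and $\meas{B,C}$ coincide (and likewise for the wagers, so that $\de{a}\plug\de{c}$ and $\de{b}\plug\de{c}$ have equal measurements). Applying this to $\Ex(\Int{\pi}{})$ versus $\Int{\pi'}{}$, one must show that the dead summands contribute measurement zero against any $\de{c}$ in the target conduct. The key computation is that each dead slice either has an empty alternating cycle set with $\de{c}$ (because the paired padding edges carry no weight, or their sources lie in a slice where $\de{c}$ has no support), or contributes terms that cancel via the weighting $\frac{1}{2}+\frac{1}{2}$ inherited from the $\with$. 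In both cases the sum $\meas{\cdot,C}$ collapses to the contribution of the matching slice, which is precisely $\meas{\Int{\pi'}{},C}$.

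The main obstacle will be the bookkeeping in the second step: making the informal description \enquote{dead slices cannot see the test} rigorous. This requires a careful use of the variant relation and of the dialect-localisation of the $\oplus_i$ interpretation, and it is where the earlier paper's technical lemmas on supports of sliced executions do the real work. Once that structural claim is in place, the conclusion $\Int{\pi}{}\cong\Int{\pi'}{}$ follows immediately, and the induction on the context of the cut (to lift from a top-level additive cut to an arbitrary one) is routine, since $\plug$ and $\cong$ are compatible.
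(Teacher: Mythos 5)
The first thing to note is that this paper does not prove Theorem \ref{thmcutadd} at all: it is stated with the attribution to \cite{seiller-goiadd} and imported as a black box from the earlier paper on additives, so there is no in-paper proof to compare your attempt against. Your sketch is therefore a reconstruction of the argument of \cite{seiller-goiadd}, and at the level of overall strategy it is the right one: decompose the execution of the sliced interpretations ``by linearity'' over pairs of slices, identify the slice matching the $\oplus_{i}$ premise with (a variant of) $\Int{\pi'}{}$, and show that the remaining summands are invisible to every test in the ambient conduct.

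However, the step that carries all the weight --- why the mismatched slices contribute nothing observable --- is exactly the step you leave open, and the two mechanisms you offer for it (``empty alternating cycle set with $\de{c}$'' versus ``terms that cancel via the weighting $\frac{1}{2}+\frac{1}{2}$'') are mutually inconsistent, and neither is justified. Note also that in \cite{seiller-goiadd} the $\oplus_{i}$ rule is interpreted via the inclusion $\cond{A_{i}}\subseteq\cond{A_{1}\oplus A_{2}}$ rather than by a padded two-slice project, so the garbage does not arise from an ``off-diagonal'' pairing of paddings but from the interaction of the \emph{wrong} component of the $\with$-sum with the cut formula; the difficulty is precisely that this component genuinely produces nonempty alternating paths and nonzero measurements. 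The published proof copes with this by building extra structure into the interpretation of the $\with$ rule and by choosing the slice coefficients so that, against any test, the wrong slice's contribution combined with the coefficient $\frac{1}{2}$ on the right slice reproduces the measurement of the reduct, which carries coefficient $1$. Your outline does not address this coefficient mismatch between $\frac{1}{2}\,\Int{\pi'}{}+(\textnormal{garbage})$ and $\Int{\pi'}{}$ at all, and without it the claimed equality of measurements fails even when the garbage vanishes. So the skeleton is plausible, but the theorem is not established by your outline: the mathematical content lives entirely in the lemma you defer to ``the earlier paper's technical lemmas.''
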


We now consider the exponential connectives. We will consider a promotion rule cut against the following rules: dereliction, digging, and contraction. We consider for this a proof $\pi$ of $\pdash A^{\pol},B$, or a proof $\pi$ of $A\pdash ;B$ as both promotion rules are treated similarly, and its interpretation $\Int{\pi}{}\in\cond{A\multimap B}$. Applying a promotion rule (polarised or not) to $\pi$ yields a proof $\rho$ whose interpretation is $\Int{\rho}{}=\oc\Int{\pi}{}\plug\de{prom}$. Then, given a proof $\pi'$ to which we apply one of the three structural rules above to obtain a proof $\rho'$, we consider the interpretations of the proof $\nu$ obtained by a cut between $\pi$ and $\rho$, and the interpretation of the proof $\nu'$ obtained by applying a step of the cut-elimination procedure on $\nu$. It turns out that those are equal, i.e. $\Int{\nu}{}=\Int{\nu'}{}$.

\begin{theorem}\label{onthenose}
If $\pi'$ is obtained from $\pi$ by applying a step of cut-elimination among (promotion/dereliction), (promotion/digging) or (promotion/contraction), then $\Int{\pi}{}=\Int{\pi'}{}$.
\end{theorem}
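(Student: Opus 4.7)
The plan is to dispatch each of the three cut-elimination steps separately by explicit computation of the associated projects on both sides of the equality. In each case we rely on the explicit formulas for $\de{prom}$, $\der{A}$, $\dig{A}$ (given in the proof sketches of Theorems \ref{functorial}, \ref{digging}, and \ref{dereliction}) together with a formula for $\de{ctr}$ built from the maps $D_{\sigma}$ discussed in Footnote \ref{footnote}. As the paper already emphasises (immediately after the proof of Theorem \ref{dereliction}), the specific choice of maps in these sketches was tuned precisely to make the reductions hold on the nose rather than just up to $\cong$.

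For the promotion/dereliction step, suppose $\pi$ contains a cut between a promoted subproof of interpretation $\oc\Int{\sigma}{}\plug\de{prom}$ and a dereliction $\der{A}$ applied to the conjugate occurrence of $\oc A^{\pol}$. The first step is to substitute the formulas for $\de{prom}$ (twist and merge) and $\der{A}$ (raise) into $\der{A}\plug\de{prom}\plug\oc\Int{\sigma}{}$ and perform a coordinate-by-coordinate verification on $\measured{X}\times\measuredunitint$. The key observation is that $\textnormal{merge}$ followed by $\textnormal{raise}$ exactly inverts, at the level of the first two Hilbert-cube slots and the external $\measuredunitint$ coordinate, the effect of perennisation performed by $\de{prom}$; what remains is precisely the underlying graphing of $\Int{\sigma}{}$, which is the interpretation of $\pi'$.

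For the promotion/digging step the same strategy gives $\dig{A}\plug\de{prom}\plug\oc\Int{\sigma}{}=\oc\oc\Int{\sigma}{}\plug\de{prom}'$ for a suitably relocated promotion project $\de{prom}'$, using the fact that $\textnormal{push}(\oc X)=\oc\oc X$ and that $\textnormal{push}$ commutes in the appropriate way with the coordinate manipulations of $\de{prom}$ (both acting only on the first few Hilbert-cube slots). This matches the interpretation of $\pi'$ in which the single promotion box has been lifted to a nested promotion of $\sigma$. For the promotion/contraction step, we write $\de{ctr}$ explicitly using $D_{\sigma}$ for $\sigma$ a transposition on $\{1,2\}$ acting on the first Hilbert-cube slot, compose with $\de{prom}\plug\oc\Int{\sigma}{}$, and track which sub-region of the dialect is routed to which copy; the result consists of two independently addressed copies of the promoted project, matching $\Int{\pi'}{}$ obtained by duplicating the promotion above the contraction.

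The main obstacle throughout is the careful coordinate-tracking required to obtain equality of projects rather than mere observational equivalence: every map involved (twist, merge, push, raise, $D_{\sigma}$, $\exchange$, $\rho$) is a measure-preserving bijection with an explicit formula on a bounded number of $\measuredunitint$-copies, and each reduction amounts to checking that a specific finite composition of such bijections is the identity (or a harmless relabelling) on the fragment of $\measured{X}\times\measuredunitint$ where the computation takes place. The most delicate case is contraction, because the interaction of $D_{\sigma}$ with the twist and merge of $\de{prom}$ produces the longest chain of rearrangements; however, since all shuffles take place among finitely many Hilbert-cube slots, the verification is finite and routine.
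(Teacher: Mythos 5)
The paper offers no actual proof of this theorem---the preceding paragraph merely sets up the cut between the promoted proof and the structural rule and asserts that the two interpretations coincide (``It turns out that those are equal'')---so your explicit-computation plan is, if anything, more detailed than what the paper itself provides. Your strategy of substituting the formulas for $\textnormal{twist}$, $\textnormal{merge}$, $\textnormal{raise}$, $\textnormal{push}$ and the contraction maps and checking that the resulting finite compositions of measure-preserving bijections act as the identity (or a harmless relabelling) on the relevant region of $\measured{X}\times\measuredunitint$ is exactly the route the paper implies, consistent with its remark after Theorem~\ref{dereliction} that the exact expressions of these maps were chosen precisely so that execution represents these cut-elimination steps on the nose.
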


So execution computes these elimination steps on the nose. What about the last step, namely (promotion/weakening)? In that case, we are faced a problem similar to what happens for additive cuts in \MALL. As execution is a completely local procedure, it cannot erase a whole proof at once. Thus, this elimination step is not soundly represented by the execution. However, one can show the following weaker result.

\begin{theorem}\label{weakcutelim}
If $\pi'$ is obtained from $\pi$ by applying a step of cut-elimination among (promotion/weakening), then $\Int{\pi}{}\cong\Int{\pi'}{}$.
\end{theorem}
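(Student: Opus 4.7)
The approach mirrors the treatment of additive cuts in Theorem~\ref{thmcutadd}: execution is a local procedure and cannot erase an entire sub-derivation in one step, so $\Int{\pi}{}$ retains a residual graphing coming from the promoted subproof that has been discarded in $\Int{\pi'}{}$. The plan is to show that this residue does not affect the orthogonality measurement against any element of the dual conduct, and therefore that $\Int{\pi}{}\cong\Int{\pi'}{}$ in the sense of~\cite{seiller-goiadd}.

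First I would make the interpretation of a weakening rule on a banged formula $\oc N$ explicit. Given a project $\de{a}$ of support $S$, its weakening $\de{w}_{N}(\de{a})$ has support $S\cup S_{\oc N}$ and its graphing extends that of $\de{a}$ by \emph{no} edges on the new part $S_{\oc N}$ (with a wager adjusted, if needed, to land in the appropriate conduct). Next I would unfold $\Int{\pi}{}$ in the promotion/weakening case: writing $\sigma$ for the promoted subproof (of some $\Delta\pdash ;A$) and $\tau$ for the proof that is weakened on $\oc A^{\pol}$ before the cut, we have
\[
\Int{\pi}{}\;=\;\oc\Int{\sigma}{}\plug\de{w}_{A}(\Int{\tau}{}),
\]
whereas $\Int{\pi'}{}=\de{w}'_{\Delta}(\Int{\tau}{})$, the weakening of $\Int{\tau}{}$ on the context $\oc\Delta$ produced by the cut-elimination step. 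Executing the cut in $\Int{\pi}{}$ uses the fact that the graphing of $\de{w}_{A}(\Int{\tau}{})$ carries no edges touching the cut $S_{\oc A}$, so no alternating path of length $\geq 1$ crosses the cut. The resulting graphing is therefore the disjoint union of (a)~the restriction of $\oc\Int{\sigma}{}$ to $S_{\oc\Delta}$ and (b)~the graphing of $\Int{\tau}{}$, which is the very same one appearing in $\Int{\pi'}{}$.

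The main obstacle is then to argue that the residue -- namely the restriction of $\oc\Int{\sigma}{}$ to $S_{\oc\Delta}$ -- is invisible to the orthogonality measurement with every test project $\de{t}$ in the dual conduct. For this I would perform a case analysis on alternating cycles between $\Int{\pi}{}$ and $\de{t}$: any such cycle either lies entirely in $\Int{\tau}{}\plug\de{t}$ (and then occurs identically as a cycle between $\Int{\pi'}{}$ and $\de{t}$), or uses at least one edge from the residue on $S_{\oc\Delta}$. In the latter case, since $\de{t}\in\cond{(\oc\Delta)^{\pol}}$ and since $\cond{\oc\Delta}$ already contains the bare weakening $\de{w}'_{\Delta}(\Int{\tau}{})$ (whose graphing is empty on $S_{\oc\Delta}$), the orthogonality constraint forces $\de{t}$ to behave trivially enough on the $\oc\Delta$-part -- in the sense that any cycle using residue edges either fails to close through $\de{t}$ or receives a weighted contribution equal to $0$. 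Combining this with the matching wager adjustment of $\de{w}_{A}$ and $\de{w}'_{\Delta}$, one obtains $\meas{\Int{\pi}{},\de{t}}=\meas{\Int{\pi'}{},\de{t}}$ for every such $\de{t}$, which is exactly the observational equivalence $\Int{\pi}{}\cong\Int{\pi'}{}$ of~\cite{seiller-goiadd}.
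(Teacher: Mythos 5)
The paper states this theorem without giving any proof at all --- the only justification offered is the preceding remark that the situation is ``similar to what happens for additive cuts in \MALL'' together with a pointer to the observational-equivalence machinery of \cite{seiller-goiadd} --- so there is no argument of the paper's to match yours against. Your setup is consistent with that framing: the unfolding of $\Int{\pi}{}$ as $\oc\Int{\sigma}{}\plug\de{w}_{A}(\Int{\tau}{})$, the observation that no alternating path of length $\geq 1$ crosses the cut because the weakened project carries no edges there, and the identification of the residue of $\oc\Int{\sigma}{}$ on the context support as the only difference with $\Int{\pi'}{}$ are all the right first moves.

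The gap is in the pivotal step. You claim that because the dual conduct's tests are orthogonal to the bare weakening (whose graphing is empty on $S_{\oc\Delta}$), they are ``forced to behave trivially'' on that part, so that cycles through the residue contribute $0$. This is a non sequitur. Orthogonality is the scalar condition $a\unit{B}+b\unit{A}+\meas{A,B}\neq 0,\infty$ on a single pairing; being orthogonal to a project with \emph{no} edges on $S_{\oc\Delta}$ produces no cycles on that support at all and hence constrains only the wager terms --- it says nothing about how the same test measures against a \emph{non-empty} graphing sitting there. A test of the form $\oc\de{d}\otimes\de{t}'$ with $\de{d}$ carrying edges over $S_{\Delta}$ will in general form alternating cycles with the residue $\oc\Int{\sigma}{}\!\restriction_{S_{\oc\Delta}}$ of non-zero measurement, and such cycles may even thread through both the residue and $\Int{\tau}{}$ via the test, so the measurement does not even split as a clean sum. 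Establishing $\Int{\pi}{}\cong\Int{\pi'}{}$ therefore requires a genuinely different mechanism --- the analogue of the compensation argument used for additive cuts in \cite{seiller-goiadd}, exploiting the precise structure of the weakening project, the zero wagers of perennisations, and the slice coefficients --- rather than a vanishing argument extracted from orthogonality alone. As written, the step ``receives a weighted contribution equal to $0$'' is exactly the point that would fail.
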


However, these results we just showed are not enough to entail that cut-elimination is soundly represented by execution \emph{up to observational equivalence}. In particular, one would need to show that perennisation and observational equivalence interact properly, i.e. one would hope for a result stating that, given two balanced projects $\de{a}$ and $\de{b}$, $\de{a}\cong\de{b}$ if and only if $\de{\oc a}\cong\de{\oc b}$. One can prove that $\de{a}\not\cong\de{b}$ implies $\de{\oc a}\not\cong\de{\oc b}$, however the converse implication is still an open question. We believe that this may be solved by considering a larger notion of equivalence, namely that of \emph{balanced observational equivalence}, i.e. equivalence w.r.t testing by balanced projects. %In terms of categorical models, a positive answer to this question would imply that the functor interpreting exponential connectives is compatible with the congruence defined by (balanced) observational equivalence, i.e. it defines a functor on the quotient category \concat. This result is therefore the last piece missing to show that quotienting the category obtained from our models provide a categorical of linear logic, although the categorical structure of the model may be quite involved due to the polarity constraints.

\section{Quantitative aspects of Interaction Graphs}

One of the key aspects of interaction graphs is their ability to accomodate quantitative aspects. Formally, recent work by the author \cite{seiller-fock} relates the models (for multiplicative linear logic) obtained by the interaction graphs construction to so-called weighted relational models by Laird \emph{et al.} \cite{quantdenot}. It is thus natural to expect Interaction Graphs models of various classical quantitative models of computation such as probabilistic computation. It appears however that the construction is flexible enough to allow for the study of a much larger class of \enquote{quantitative aspects}, as illustrated by two quite different examples that we now discuss: Kennedy's units of measure \cite{unitsofmeasure}, and Ghica and Smith's bounded linear types \cite{Ghica-BLL}.

We will limit the discussions of these examples here either to simply typed lambda-calculus, although the model is more expressive. Detailed study of the models in whole would lead us out of the scope of this paper, and should be the subject of future work.

%\subsection{Interpretation of integers}
%
%We first explain how one can represent integers in the models described above; we chose the unary representation of integers for conciseness, but everything we depict applies to the binary representation as well. The reader should be aware that there is not a unique such representation. In intuitionistic logic, the type $\forall X, (X\Rightarrow X)\Rightarrow(X\Rightarrow X)$, which translate (through the so-called Girard's translation) to $\forall X, \oc( \oc X\multimap X)\multimap \oc X\multimap X$ in linear logic, is used for such a purpose. However, in linear logic, one can also use the type $\forall X, \oc (X\multimap X) \multimap \oc (X\multimap X)$; this latter choice has the advantage to work in constrained linear logic systems such as \ELL. We here follow this second choice, and refer the reader to earlier work for more detailed explanations about integer representation \cite{seiller-conl}. %The important point is that a given integer $n$ will be represented as a  
%
%Let us see how the integer $2$ is represented. It is the interpretation of the proof:
%\begin{prooftree}
%\AxiomC{$\vdash X,X^{\pol}$}
%\AxiomC{$\vdash X,X^{\pol}$}
%\BinaryInfC{$\vdash X\otimes X^{\pol}, X^{\pol}, X$}
%\AxiomC{$\vdash X,X^{\pol}$}
%\BinaryInfC{$\vdash X\otimes X^{\pol}, X\otimes X^{\pol}, X^{\pol}, X$}
%\UnaryInfC{$\vdash X\otimes X^{\pol}, X\otimes X^{\pol}, X^{\pol}\parr X$}
%\UnaryInfC{$\vdash \wn (X\otimes X^{\pol}),\wn (X\otimes X^{\pol}),\oc (X^{\pol}\parr X)$}
%\UnaryInfC{$\vdash \wn (X\otimes X^{\pol}),\oc (X^{\pol}\parr X)$}
%\end{prooftree}

\subsection{Units of measure and bounded linear types}

One of the most straightforward examples is the work by Kennedy \cite{unitsofmeasure} introducing types systems dealing with the notion of \emph{units}. Here, the set of units one wants to deal with, e.g. meters, seconds squared, meters per second, is represented as an (abelian) group $U$.  Then, types are considered dependent upon elements of this group, and one is allowed to consider the type of meters $\Nat[m]$ or the type of functions from masses to energies (such as multiplication by a squared velocity for instance). Moreover one is allowed to consider quantification over types of units. For instance, multiplication of natural numbers could be typed as $\forall u\forall u', \Nat[u]\times\Nat[u']\Rightarrow\Nat[uu']$.

Now, if one considers the Interaction graph model described in this paper in which the monoid $\Omega$ is replaced by the chosen abelian group of units $U$, then we can easily interpret unit-dependent types and quantification over units. A very naive way to do so consists in appending to the interpretation of a term a single edge on a measurable subset used and reserved for the sole purpose of interpreting units, and weight this edge with the appropriate unit. For instance, a natural number of type $\Nat[1]$ (resp. of type $\Nat[m]$) will just be the usual interpretation of the natural number extended by a single edge of weight $1$ (resp. of weight $m$). Moreover, polymorphism is here something we get for free, using the same methods as for second-order quantification: the type $\forall u, \cond{A}[u]$ is simply interpreted as the conduct $\bigcap_{u\in U} \cond{A}[u]$.

%Of course, this naive approach is named well. If one wants to interpret finer quantitative approaches, one needs to be more subtle in defining with the interpretations.

%Now, the more involved type system considered by Ghica 

%Let us describe integers. The type $\Nat[u]$ is simply defined from the interpretation of natural numbers described above where the edge going out of the $\star$ symbol is given the weight $u$. One can then check that the type $\forall u, \Nat[u]$ -- the intersection of all $\Nat[u]$ types -- is empty, as expected, and that usual arithmetic operations can be typed coherently.

%\subsection{Bounded linear types}

Further to this limited setting of Kennedy's unit of measures type system, this same interpretation also provides models of bounded linear types. Bounded linear types were introduced by Ghica and Smith \cite{Ghica-BLL} as a generalisation of Girard, Scedrov and Scott's bounded linear logic \BLL \cite{bll}, a variation of linear logic that accounts for polynomial time computation. The definition of a bounded linear type system is dependent upon a resource semiring $(J,+,\times,0,1)$. It is clear that the underlying monoid $(J,\times,1)$ could be chosen as the weight monoid in our models without hidering the naive interpretation considered above. 

Thus, the only real challenge lies in the interpretation of the sum of the semiring, which is used only in the contraction rule. But one can choose to represent the sum through a splitting of the dialect $[0,1]$. This leads to a correct interpretation of the (exponential) contraction rule since $\de{ctr}$ will map an element of the tensor in $\lambda.\cond{A}\otimes \mu.\cond{A}$ to an element of $(\lambda+\mu).\cond{A}$. In particular, it will take the two single edges encoding the elements of $J$ and superimposing them using the dialect, creating the exact situation we chose for interpreting the sum. 

Another way of thinking about this naive interpretation is that all that is needed to interpret these type systems is a single conduct $\cond{U}$ which possesses subconducts (i.e.\ subtypes) that can be used to simulate algebraic operations in the group $U$ (for units of measure) or in the semiring $J$ (for bounded linear types). Using notations from previous papers, the previous argumentation shows that this conduct $\cond{U}$ can be chosen as $\cond{\oc T_{V}}$ with $V$ a non-negligible measurable set. The naive interpretation $\Int{A}{}$ is then obtained by defining the interpretation of a type $A$ to be its classic interpretation $\Int{A}{c}$ in a simple type system tensored with $\cond{U}$, i.e. $\Int{A}{}=\Int{A}{c}\otimes \cond{U}$.

The generality of our models seems to accomodate for the more general framework allowing for resource variables and resource polymorphism considered by dal Lago and Hoffman \cite{hoffmandallago}.

\subsection{Probabilistic computation}

We now discuss probabilistic computation. We first show how to interpret a simple typed probabilistic lambda calculus without using the monoid of weights, but only the fact that our dialect can be continuous. The probabilistic lambda-calculus we consider \cite{ProbLambda} is defined by the following grammar:
$$ t := x ~|~ \lambda x. t ~|~ tt ~|~ \sum_{i=1}^{k} \alpha_{i} t_{i} $$
where in the last expression the coefficients $\alpha_{i}$ are elements of $[0,1]$ which add up to $1$.

As mentioned in the introduction, future work will show how the models just defined can lead to interpretation of pure lambda-calculus. We are however bound to restrict to typable terms in this paper. We will consider simple types only, although nothing but space constraints prevent us for considering more expressive type system, e.g. system F. We thus consider simple types, and will use the following  typing rule for the additional construct on terms (probabilistic sums):
\begin{prooftree}
\AxiomC{$\Gamma\vdash t_{1}: P$}
\AxiomC{$\dots$}
\AxiomC{$\Gamma\vdash t_{k}: P$}
\TrinaryInfC{$\Gamma\vdash \sum_{i=1}^{k} \alpha_{i} t_{i}: P$}
\end{prooftree}

Now, we will interpret a given term of type $A$ as an element of a conduct $\cond{\oc A}$. The use of the exponential modality will allow us to interpret the probabilistic sum using a variant of the contraction rule. In fact the interpretation can be understood as follows:
\begin{prooftree}
\AxiomC{$\Gamma\vdash t_{1}: P$}
\AxiomC{$\dots$}
\AxiomC{$\Gamma\vdash t_{k}: P$}
\TrinaryInfC{$\Gamma\vdash <t_{1},\dots,t_{k}>: \otimes_{i=1}^{k} P$}
\UnaryInfC{$\Gamma\vdash \sum_{i=1}^{k} \alpha_{i} t_{i}: P$}
\end{prooftree}
where the last operation is a \emph{weighted contraction}. To illustrate how this weighted contraction works, we detail an example with a sum of two terms $\alpha t + (1-\alpha t)$. The weighted contraction, instead of splitting the dialect into $[0,1/2]$ on one hand and $[1/2,1]$ on the other, will split it into $[0,\alpha]$ and $[\alpha,1]$. This graphing can be understood as the realisation of the following thick graph\footnote{Elements of the dialect (resp. support) are listed on a vertical (resp. horizontal) scale; double arrows represent two inverse edges.}:
\begin{center}
\begin{tikzpicture}[x=1.2cm,y=0.6cm]]
	\node (A1) at (0,0) {$\bullet$};
	\node (A2) at (0,2) {$\bullet$};
	\node (B1) at (2,0) {$\bullet$};
	\node (B2) at (2,2) {$\bullet$};
	\node (C1) at (4,0) {$\bullet$};
	\node (C2) at (4,2) {$\bullet$};
%	\node (LA) at (0,-0.3) {\scriptsize{{\sc throw}}};
%	\node (LB) at (2,-0.3) {\scriptsize{{\sc head}}};
%	\node (LC) at (4,-0.3) {\scriptsize{{\sc tails}}};
	\node (L1) at (-0.7,0) {$\scriptsize{[0,\alpha]}$};
	\node (L2) at (-0.7,2) {$\scriptsize{[\alpha,1]}$};
	
	\draw[<->] (A1) .. controls (0,1) and (2,1) .. (B1) {};% node [midway, above] {$\frac{1}{2}$};
	\draw[<->] (A2) --  (C1) {}; %node [midway, above] {$\frac{1}{2}$};
	
\end{tikzpicture}
\end{center}

In order to really use the weights though, we will consider the example of a coin-toss operation, which then corresponds simply to the case $\alpha=1/2$ in the above example. By considering the monoid of weights $[0,1]$ with the usual multiplication, we can refine this interpretation by introducing non-trivial weights:
\begin{center}
\begin{tikzpicture}[x=1.2cm,y=0.6cm]]
	\node (A1) at (0,0) {$\bullet$};
	\node (A2) at (0,2) {$\bullet$};
	\node (B1) at (2,0) {$\bullet$};
	\node (B2) at (2,2) {$\bullet$};
	\node (C1) at (4,0) {$\bullet$};
	\node (C2) at (4,2) {$\bullet$};
	\node (LA) at (0,-0.3) {\scriptsize{{\sc throw}}};
	\node (LB) at (2,-0.3) {\scriptsize{{\sc head}}};
	\node (LC) at (4,-0.3) {\scriptsize{{\sc tails}}};
	\node (L1) at (-0.7,0) {$\scriptsize{[0,\frac{1}{2}]}$};
	\node (L2) at (-0.7,2) {$\scriptsize{[\frac{1}{2},1]}$};
	
	\draw[<->] (A1) .. controls (0,1) and (2,1) .. (B1) node [midway, below] {\scriptsize{$\frac{1}{2}$}};
	\draw[<->] (A2) --  (C1) node [midway, above] {\scriptsize{$\frac{1}{2}$}};
	
\end{tikzpicture}
\end{center}

%This refined case is of particular interest when one considers a specific orthogonality defined by the map $m: [0,1]\rightarrow \realposN\cup{\infty}, x\mapsto -\log(1-x)$. Indeed, in this specific case studied in the author's first paper on interaction graphs \cite{seiller-goim}, a graph $G$ is equivalent to its \emph{contraction} $\what{G}$, defined by replacing the set $(e_{i})_{i\in I}$ of all edges in $G$ of source $s$ and target $t$ by a single edge of source $s$ and target $t$ whose weight $\omega$ is the sum of the weights of the edges $e_{i}$: $\omega=\sum_{i\in I} w(e_{i})$.

%Although the model shown above accommodates very expressive type systems, e.g. system F, we will here restrict our discussion to the simply typed lambda-calculus for conciseness.

\section{Conclusion}

Girard's so-called \enquote{GoI3} model \cite{goi3} already provided an interpretation of full linear logic. However, we managed to do so in a quantitative-flavoured framework. As examples, we explained how the models constructed can model various quantitative aspects, such as probabilistic computation, as well as some seemingly type-theoretic constructs such as modalities \emph{\`{a} la} bounded linear logic, or Kennedy's units of measure. These exemplify the wide range of quantitative informations that can be dealt with in the models.

Beyond the results presented here, the adaptation of the interaction graphs framework to deal with continuous dialects results in a more mature and complete construction that opens new directions for future work. In particular, the possibility to restrict or expand the ways map can act on the dialect through the microcosm can be of interest in terms of computational complexity. Indeed, while the weight monoid and the measurement of weights seem to be related to different computational paradigms and can be used, for instance, for representing probabilistic computation, the microcosm can be used to restrict the computational principles allowed in the model and characterise in this way various complexity classes \cite{seiller-towards}. All characterisations considered in the cited work were based on variants of exponential connectives satisfying at least the contraction principle. In the more general construction explained here, we are now able to consider models of exponentials that do not satisfy this principle. In this line of work, it would be interesting to understand if one can adapt Mazza and Terui's work on parsimonious lambda-calculus \cite{MazzaParsimonious,Mazza:CSL2015,MazzaTerui:ICALP2015}, and obtain an interaction graph model for it.

\acks

This work was partly supported by the Marie Sk\l{}odowska-Curie Individual Fellowship (H2020-MSCA-IF-2014) 659920 - ReACT and the ANR 12 JS02 006 01 project \href{http://lipn.univ-paris13.fr/~pagani/pmwiki/pmwiki.php/Coquas/Coquas}{COQUAS}.

\bibliographystyle{abbrvnat}

\end{document}